\newcommand{\algo}[1]{\texttt{#1}}
\begin{document}

\title{On Computing Maximal Independent Sets of Hypergraphs in Parallel}

\author{
Ioana O. Bercea \thanks{Department of Computer Science, University of Maryland, USA} 
\and Navin Goyal \thanks{Microsoft Research India}
\and David G. Harris \thanks{Department of Applied Mathematics, University of Maryland, USA} 
\and Aravind Srinivasan \thanks{Department of Computer Science, University of Maryland and UMIACS, USA}
}

\maketitle
\begin{abstract}
Whether or not the problem of finding maximal independent sets (MIS) in hypergraphs is in \emph{(R)NC} is one of the fundamental problems in the theory of parallel computing. Unlike the well-understood case of MIS in graphs, for the hypergraph problem, our knowledge is quite limited despite considerable work.
It is known that the problem is in \emph{RNC} when the edges of the hypergraph have constant size. For general hypergraphs with $n$ vertices and
$m$ edges, the fastest previously known algorithm works in time $O(\sqrt{n})$ with \newline $\text{poly}(m,n)$ processors.
In this paper we give an EREW PRAM algorithm that works in time $n^{o(1)}$ with $\text{poly}(m,n)$ processors on general 
hypergraphs satisfying $m \leq n^{\frac{\log^{(2)}n}{8(\log^{(3)}n)^2}}$ where $\log^{(2)}n = \log\log n$ and $\log^{(3)}n = \log\log\log n$.  
Our algorithm is based on a sampling idea that reduces the dimension of the hypergraph and employs 
the algorithm for constant dimension hypergraphs as a subroutine. 
\end{abstract}

\newpage

\section{Introduction}
Fast parallel algorithms for constructing maximal independent sets (MIS) in graphs are well studied and very efficient algorithms
are now known (see, e.g., \cite{KarpR90} for a brief survey). These algorithms serve as a primitive in numerous applications. The more general problem of fast parallel MIS in hypergraphs is also well studied but is not so well understood. Let us first formally state the 
problem before describing what is known about it.  

A \emph{hypergraph} $H=(V,E)$ is a set of vertices $V$ and a collection of edges $e \in E$ such that $e \subseteq V$. The \emph{dimension} of a hypergraph is the maximum edge size. Let $n$ be the number of vertices, $m$ the number of edges and $d$ the dimension of the hypergraph. A subset of vertices of $H$ is called \emph{independent} in $H$ if it contains no edge. We call an independent set \emph{maximal} if it is not contained in a larger independent set. Karp and Ramachandran~\cite{KarpR90} asked whether the problem of finding an MIS in hypergraphs is in NC. 
While the general problem remains open, progress has been made on some special classes of hypergraphs. We now briefly survey some 
of the previous work; further references can be found in the papers mentioned below.

In a seminal paper, Beame and Luby~\cite{Beame1990} gave an algorithm (called the {\bf \algo{\textsc{BL}} algorithm} henceforth) and 
showed 
that the problem is in RNC for hypergraphs with edges of size at most 3 (\cite{Beame1990} claimed that their algorithm was in RNC for all constant dimension hypergraphs; this however turned out to be erroneous). This algorithm is similar to some of the MIS algorithms for graphs and is based on independently marking vertices and unmarking if all vertices in an edge get marked. 
Kelsen~\cite{Kelsen1992} extended the analysis of the \algo{\textsc{BL}} algorithm to hypergraphs with constant dimension (the dimension can actually be super-constant; we state the precise bound later in the paper where we use this fact). Luczak and Szymanska~\cite{LuczakS97} showed that the problem is in RNC for linear hypergraphs (linear hypergraphs satisfy $|e \cap e'|\leq 1$ for all distinct edges $e, e'$). Beame and Luby~\cite{Beame1990} also gave another appealing algorithm based on random permutations which they conjectured to work in RNC for the general problem. Shachnai and Srinivasan~\cite{ShachnaiS04} made progress towards the 
analysis of this algorithm. For general hypergraphs, Karp, Upfal and Wigderson~\cite{Karp1988} gave an algorithm with running time $O(\sqrt{n})$ and $\mathrm{poly}(m,n)$ processors (their algorithm actually works in a harder model of computation where the hypergraph is accessible only via an oracle, but it can be adapted to run in time $O(\sqrt{n}) \cdot (\log n+\log m)$ with high probability on $mn$ processors).

\paragraph{Our contribution} We give a parallel algorithm that we call the \algo{\textsc{SBL}} (\emph{sampling \algo{\textsc{BL}}}) algorithm. The algorithm works on hypergraphs that do not have too many edges but no other restrictions and works in time $O(n^{o(1)})$. This is the first parallel algorithm that works on general hypergraphs with a relatively weak restriction on the cardinality of the edge set and a running time of $o(\sqrt{n})$. 

More precisely, 
\newtheorem{theorem}{Theorem}
\begin{theorem}
The \algo{\textsc{SBL}} algorithm finds a maximal independent set in hypergraphs with $n$ vertices and $m$ edges and
$m \leq n^{\frac{\log^{(2)}n}{8(\log^{(3)}n)^2}}$. It runs in time $O(n^{2/\log^{(3)}n})$ on EREW PRAM with 
$\mathrm{poly}(m,n)$ processors. 
\end{theorem}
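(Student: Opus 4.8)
The plan is to reduce to the constant-dimension case by sampling, and to pay for the reduction with a subpolynomial number of repetitions. Fix a dimension threshold $D=D(n)$ and a sampling probability $p=p(n)$, chosen as follows. Let $D$ be (essentially) the largest dimension for which the quantitative form of Kelsen's analysis of the \algo{\textsc{BL}} algorithm, stated earlier, runs in time $O(n^{1/\log^{(3)}n})$ on a hypergraph of that dimension with $\mathrm{poly}(m,n)$-bounded size; this turns out to be $D=\Theta\!\big(\log^{(2)}n/\log^{(3)}n\big)$. Then take $p$ as large as possible subject to $m\,p^{D}\le n^{-\Omega(1)}$; after taking logarithms this is the requirement $D\log_n(1/p)\ge \log_n m+\Omega(1)$, which, since $D=\Theta(\log^{(2)}n/\log^{(3)}n)$ and $\log_n m\le \frac{\log^{(2)}n}{8(\log^{(3)}n)^2}$ by hypothesis, is met by some $p$ with $\log_n(1/p)=\Theta(1/\log^{(3)}n)$, i.e. $p^{-1}=n^{O(1/\log^{(3)}n)}$ with small implied constant once the numerical constant $8$ in the hypothesis is tracked against the constant hidden in $D$. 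The content of the theorem is really that these requirements on the pair $(D,p)$ are simultaneously satisfiable.

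The \algo{\textsc{SBL}} algorithm maintains a partial independent set $I$ (initially $\emptyset$), the set $L\subseteq V$ of vertices that are neither committed to $I$ nor yet ruled out, and the residual hypergraph $H'=(L,E')$ whose edges are the sets $e\setminus I$ over those $e\in E$ that still have all their vertices in $I\cup L$. One checks by a short induction that $I$ together with any maximal independent set of $H'$ is a maximal independent set of $H$. A phase runs as follows: sample $S\subseteq L$ by keeping each vertex of $L$ independently with probability $p$; if some edge of $E'$ of size greater than $D$ is entirely contained in $S$ --- an event of probability at most $\sum_{e\in E',\,|e|>D}p^{|e|}\le m\,p^{D}=n^{-\Omega(1)}$ by the union bound --- discard $S$ and repeat the phase; otherwise $H'[S]$ has dimension at most $D$, so run Kelsen's algorithm on $H'[S]$ to obtain a maximal independent set $J$ of $H'[S]$, set $I\leftarrow I\cup J$, and recompute $H'$ (deleting the newly committed vertices $J$ and every vertex that now lies in a residual edge of size $1$). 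Because $J$ is maximal in $H'[S]$, each vertex of $S\setminus J$ lies in an edge of $H'[S]$ all of whose other vertices are in $J$; that edge shrinks to a singleton, so the vertex is deleted. Hence after the phase $L\cap S=\emptyset$, and in particular $|L|$ decreases by at least $|S|$.

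Since $\mathbb{E}|S|=p|L|$, a Chernoff bound gives $|L|\mapsto(1-p/2)|L|$ with high probability in each phase as long as $p|L|=\Omega(\log n)$; hence after $O(p^{-1}\log n)=n^{o(1)}$ phases the live set has size at most $L_0:=n^{1/\log^{(3)}n}$. At that point I would stop sampling and finish by running the Karp--Upfal--Wigderson algorithm~\cite{Karp1988} on $H'$ restricted to $L$, which handles edges of arbitrary size in time $O(\sqrt{|L|}\cdot\mathrm{polylog}(mn))=n^{o(1)}$; appending the independent set it returns to $I$ gives, by the invariant, a maximal independent set of $H$, completing correctness. The failure probability is at most (number of phases)$\cdot n^{-\Omega(1)}=o(1)$: a phase can fail only if a large edge survives its sample, a Kelsen subcall fails, or a Chernoff estimate fails, and there are $n^{o(1)}$ phases.

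For the running time, each phase costs $O(n^{1/\log^{(3)}n})$ --- the cost of the single Kelsen subcall --- while the sampling, the test for a surviving large edge, and the recomputation of $H'$ are all $\mathrm{polylog}(mn)$ parallel time on $\mathrm{poly}(m,n)$ processors, implementable on an EREW PRAM via sorting and prefix sums. Multiplying by the $O(p^{-1}\log n)$ phases and using $p^{-1}=n^{O(1/\log^{(3)}n)}$ gives, with the implied constants chosen appropriately, total time $O(n^{2/\log^{(3)}n})$ (the final Karp--Upfal--Wigderson step being negligible) on $\mathrm{poly}(m,n)$ processors. The main obstacle is not any single estimate --- the dimension-reduction claim is the one-line union bound above --- but the balancing act behind the choice of $(D,p)$: $m\,p^{D}$ must be negligible, which is what limits how large $m$ may be; $p^{-1}\log n$ must be $n^{o(1)}$ so that few phases suffice; and $D$ must still lie in the admissible range of Kelsen's super-constant-dimension analysis with per-phase running time within budget. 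It is exactly the tension among these three that yields both the admissible edge bound $m\le n^{\frac{\log^{(2)}n}{8(\log^{(3)}n)^2}}$ and the running time $n^{2/\log^{(3)}n}$, so most of the work is this parameter bookkeeping together with the quantitative invocation of Kelsen's theorem~\cite{Kelsen1992}.
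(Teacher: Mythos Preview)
Your proposal is correct and follows essentially the same route as the paper: sample vertices with probability $p=n^{-\Theta(1/\log^{(3)}n)}$, run Kelsen's \algo{\textsc{BL}} on the induced hypergraph (which has dimension at most $D=\Theta(\log^{(2)}n/\log^{(3)}n)$ with high probability by the union bound $mp^{D+1}\le n^{-\Omega(1)}$), iterate for $O(p^{-1}\log n)$ phases until $|L|\le 1/p^2$, and finish with \algo{\textsc{KUW}}. The paper makes the same parameter choices ($\alpha=1/\log^{(3)}n$, $d=\frac{\log^{(2)}n}{4\log^{(3)}n}$, switch-over at $1/p^2$), uses the same three failure events (Chernoff for progress, large edge fully sampled, \algo{\textsc{BL}} failure), and arrives at the same $n^{2/\log^{(3)}n}$ running time; only cosmetic differences remain (the paper declares \texttt{FAIL} rather than re-sampling, and removes all of $V'$ directly rather than via your singleton-edge argument).
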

The parameters above have been chosen to keep the computation in the analysis simple and there is some
flexibility in their choice. 

Our algorithm crucially uses \algo{\textsc{BL}} as a subroutine. However, we need to use it on hypergraphs with slightly superconstant
dimensions. Kelsen's original analysis \cite{Kelsen1992} of \algo{\textsc{BL}} is formulated for constant dimension hypergraphs. A slight modification of this analysis, specifically in the potential function used to describe progress being made in each round, allows it to be 
applicable without the assumption that the dimension is $O(1)$. We present this modification. 
We also discuss an additional improvement that can be made to Kelsen's analysis of the  \algo{\textsc{BL}} algorithm, which could be of 
independent interest:  Kelsen 
developed 
concentration inequalities for polynomials in independent random variables. Since then much stronger versions of such inequalities have
become available \cite{KimVu, SchudySviridenko}.  We employ one such inequality and obtain an improved upper bound which we later use in the analysis.
Unfortunately, the above modification does not lead to a significant improvement in the final running time of the algorithm.
Nevertheless, we hope that this identifies the main bottlenecks in Kelsen's approach and will be useful for the future work. 

\paragraph{Organization} The next section is devoted to the \algo{\textsc{SBL}} algorithm. Section 3 delves into Kelsen's analysis of the
\algo{\textsc{BL}} algorithm. We note that there is a large overlap with Kelsen's paper in Section 3 owing to the fact that we are mainly 
talking about modifications to his analysis and this requires us to restate many of his results and proofs to make the paper somewhat
self-contained. 

\section{\algo{\textsc{SBL}} Algorithm}

We now explain the \algo{\textsc{SBL}} algorithm which mainly uses the \algo{\textsc{BL}} algorithm as a subroutine. Denote the input hypergraph by $H = (V, E)$. Intuitively, we can think of  the \algo{\textsc{BL}} algorithm as iteratively coloring the vertices in $V$ red or blue; at the end of the
run of the algorithm the blue vertices will form the final MIS. The idea of our algorithm is to randomly sample a subset $V'$ of vertices by independently marking each vertex in $V$ with probability $p$ (to be carefully chosen). With high probability, the hypergraph $H' = (V', E')$, where $E' = \{e \in E : e \subseteq V'\}$ is the set of edges with all vertices marked, has dimension at most $d$, where $d$ is suitably small (if $H'$ has an edge with size more than $d$ then we declare failure and start over). We then apply the \algo{\textsc{BL}} algorithm to $H'$ to get a red-blue coloring of its vertices, where blue vertices form an MIS in $H'$. This coloring will be the permanent coloring of the vertices of $H'$. Going back to $H$, we remove the edges of $H$ that have a red vertex as these edges cannot be all blue in any completion of the coloring of $H'$. For the remaining edges, we remove their blue vertices and thus get a hypergraph on $V \setminus V'$. We repeat the above process on this updated hypergraph until the number of edges becomes at most $1/p^2$. At this point we can just use the algorithm that takes time linear in the number of vertices or alternatively the Karp--Upfal--Wigderson algorithm~\cite{Karp1988} which we shall call \algo{\textsc{KUW}}.  The vertices in the MIS returned by this last call will again be colored blue, while the rest will be colored red.

\begin{algorithm}
\caption{\algo{\textsc{SBL}}}

\textbf{Input:} A hypergraph $H = (V,E)$ \\
\textbf{Output:} A maximal independent set $I \subseteq V$

~\

\begin{algorithmic}[1] 
\STATE Let $p = 1/n^{\alpha}$ and  $d = \frac{\log^{(2)}n}{4 \log^{(3)}n}$ where $n = |V|$ and $\alpha = 1/\log^{(3)}n$.
\STATE $I \leftarrow \emptyset$
\IF{$ \max_{e \in E} |e| > d$}
	\WHILE {$|V| \geq 1/p^2$}
		\STATE{\textit{Invariant:}  If $I'$ is an IS in $H'$, then $I \cup I'$ will be an IS in $H$. } 
		\STATE Select vertices independently at random with probability $p$ 
		\STATE Let $V'$ be the collection of such selected vertices, $E' = \{e \in E : e \subseteq V'\}$ and $H' = (V',E')$.
		\IF{$ \max_{e' \in E'} |e'| > d$}
			\STATE FAIL 
		\ELSE
			\STATE Run $I' = $\algo{\textsc{BL}}$(H')$ 
			\STATE Update $I=I \cup I'$, $V = V \setminus V'$ 
			\FORALL {$e \in E$}
							\IF {$e \cap (V' \setminus I') \neq \emptyset$}
								\STATE $E \leftarrow E \setminus e$.
							\ENDIF
			\ENDFOR
			\FORALL {$e \in E$}
							\STATE $e \leftarrow e \setminus I'$.
							
			\ENDFOR
		\ENDIF
	 \ENDWHILE
	 \STATE Run $I'= \algo{\textsc{KUW}}(H$).
	 \STATE Update $I=I \cup I'$.
\ELSE
	\STATE Run $I = $\algo{\textsc{BL}}$(H)$.
\ENDIF 
	\STATE Return $I$.
\end{algorithmic}
\end{algorithm}

\subsection{Correctness of \algo{\textsc{SBL}}} We claim that in the final coloring produced by the \algo{\textsc{SBL}} algorithm, the set of blue vertices forms an MIS in the original hypergraph $H$.  Let $H_i$ be the hypergraph being colored in round $i$, where by round we mean one iteration of the \textbf{while} loop or the last call we make either to \algo{\textsc{BL}} or \algo{\textsc{KUW}}. We use the fact that the set $I'$ returned either by \algo{\textsc{BL}} or \algo{\textsc{KUW}} is indeed an MIS in $H_i$, so every violation of independence or maximality in $H_i$ leads to a contradiction.

If the final set of blue vertices is not independent then there is some round $i$ 
of \algo{\textsc{SBL}} in which some edge $e$ became fully blue. This means that a nonempty subset $e'$ of $e$ must be an edge in the hypergraph $H_i$, since $e \setminus e'$ is fully blue and $e'$ is not yet colored. But now round $i$ cannot color $e'$ fully blue because it finds an MIS in $H_i$---a contradiction.

If the final set of blue vertices is not maximal, then it means that some red vertex can be recolored blue without violating independence. Let $v$ be such a vertex and suppose that it was colored red in round $i$. Then in $H_i$, there exists a hyperedge $e'$ such that recoloring $v$ blue will make it fully blue which in turn would lead to some edge $e$ in $H$ being fully blue---a contradiction.

\subsection{Analysis of \algo{\textsc{SBL}} }
 We use the \algo{\textsc{BL}} algorithm
as a subroutine and use the following theorem about its performance:

\begin{theorem}\label{thm:Kelsen}
On a hypergraph with $n$ vertices, $m$ edges and dimension $d \leq \frac{\log^{(2)}n}{4 \log^{(3)}n}$, the \algo{\textsc{BL}} algorithm terminates after 
$O((\log n)^{(d+4)!})$ time with probability at least \newline $1- 1/n^{\Theta(\log n \log^{(2)} n)}$. It uses $\mathrm{poly}(m,n)$
processors and can be implemented on EREW PRAM. 
\end{theorem}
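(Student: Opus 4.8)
The plan is to follow Kelsen's inductive analysis~\cite{Kelsen1992} of the \algo{\textsc{BL}} algorithm, but to restate it with a potential function that does not implicitly assume $d = O(1)$, and then to verify that with the bound $d \le \frac{\log^{(2)}n}{4\log^{(3)}n}$ every $d$-dependent factor that arises stays within the claimed running time and failure probability. Recall that \algo{\textsc{BL}} proceeds in rounds: in each round every surviving vertex is marked independently with a probability tuned to the current degree structure; a marked vertex is killed if it lies in an edge all of whose vertices are marked; the surviving marked vertices are added to the independent set; and then satisfied edges are discarded and the chosen vertices deleted. The core of the argument is a single-round progress lemma.

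First I would introduce, for a hypergraph of dimension $d$, the degree parameters $d_j(H)$ for $2 \le j \le d$ (roughly, the maximum over sets $T$ of fewer than $j$ vertices of the number of size-$j$ edges containing $T$), and from these a potential $\Phi(H)$ — a suitably weighted combination of the $\log d_j(H)$ — chosen so that (i) $\Phi(H) = O(\log n)$ at the start, (ii) $\Phi(H) < 1$ forces $E(H) = \emptyset$, and (iii) the recursion over $j$ that controls how fast $d_j$ drops in terms of the prior drops of $d_2, \dots, d_{j-1}$ closes without constants that grow too fast in $d$. The one-round progress lemma then says that, conditioned on the current hypergraph, with probability $1 - 1/n^{\Theta(\log n \log^{(2)}n)}$ the potential falls by a $\mathrm{polylog}(n)$ factor. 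Its proof has two ingredients: an expectation computation showing the relevant marked-and-surviving substructures shrink in expectation, and a concentration statement that these counts — which are polynomials of degree at most $d$ in the independent marking indicators — stay near their expectations. Here I would use a modern polynomial concentration inequality (Kim--Vu or Schudy--Sviridenko~\cite{KimVu, SchudySviridenko}) in place of Kelsen's, which both streamlines the bookkeeping and gives a somewhat sharper tail; the output needed is exactly a deviation probability of the form $n^{-\Theta(\log n \log^{(2)}n)}$ for degree-$d$ polynomials, which these inequalities deliver in this regime of $d$.

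Next I would iterate the progress lemma. Unwinding the recursion over dimensions $2, 3, \dots, d$ bounds the number of rounds needed to drive $\Phi$ below $1$ by an iterated expression of the form $(\log n)^{(d+4)!}$, and since this is $n^{o(1)}$ for $d = \Theta(\log^{(2)}n/\log^{(3)}n)$, a union bound over all rounds keeps the total failure probability at $1 - 1/n^{\Theta(\log n\log^{(2)}n)}$. Each round is plainly implementable on an EREW PRAM — independent marking, testing each edge for being fully marked, and deleting satisfied edges and chosen vertices are all $O(\log n)$-time, $\mathrm{poly}(m,n)$-processor operations, using standard prefix-sum and sorting primitives to avoid concurrent reads and writes — so the overall time is $(\log n)^{(d+4)!}$ up to the per-round polylog factor, which is absorbed.

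The main obstacle, and the substance of the modification, is controlling the $d$-dependence everywhere simultaneously: the weights in $\Phi$, the constants generated by the dimensional recursion, and the $d$-dependent constants in the polynomial concentration inequality all feed into both the exponent $(d+4)!$ and the failure exponent $\Theta(\log n \log^{(2)}n)$. One must check that with $d = \Theta(\log^{(2)}n/\log^{(3)}n)$ one has $(d+4)! = (\log n)^{o(1)}$, so that $(\log n)^{(d+4)!} = n^{o(1)}$, while the concentration tail remains as strong as claimed. This is precisely where Kelsen's original analysis could treat $d$ as a constant for free, and where the restated potential has to be picked so the arithmetic closes; the remainder is essentially Kelsen's proof with the bounds tracked explicitly rather than buried inside $O(\cdot)$ notation.
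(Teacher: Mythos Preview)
Your high-level plan --- follow Kelsen's inductive analysis and redo the potential so the $d$-dependent constants close --- is the paper's approach, but two concrete commitments you make are off, and the one step that actually carries the theorem is left as ``picked so the arithmetic closes.'' First, the paper does \emph{not} use Kim--Vu or Schudy--Sviridenko to prove this theorem; it retains Kelsen's own concentration bound (Theorem~\ref{thm:Kelsenbound} and its corollary giving migration $\le \sum_{k>j}(\log n)^{2^{k-j+1}}\Delta_k$). The Kim--Vu improvement is discussed separately and shown \emph{not} to affect the $(d+4)!$ exponent, because the recursion governing the running time is driven by Lemma~\ref{lemma2} (the $(\epsilon/a)^j$ factor), not by the concentration slack. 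Second, the progress is not ``potential falls by a polylog factor each round.'' The potential is $v_2(H)$, where $v_d(H)=\Delta_d(H)$ and $v_i(H)=\max\{\Delta_i(H),(\log n)^{f(i)}v_{i+1}(H)\}$; one shows $v_2$ drops by a \emph{constant} factor only after $q_d$ stages, with $q_j = 2^{d(d+1)}(\log^{(2)}n)(\log n)^{F(j-1)(j-1)+2}$ and $F(i)=\sum_{\ell\le i}f(\ell)$.

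The substance of the modification your sketch omits is this: Kelsen took $F(i)=iF(i-1)+7$, which worked only because $2^{d(d+1)}=O(1)$ for constant $d$. With $d$ growing, the key inequality $q_j\sum_{k>j}(\log n)^{2^{k-j+1}}T_k(1+\lambda)\le \lambda T_j$ fails already at $k=j+1$. The paper's fix is to replace the additive $7$ by $d^2$, i.e.\ set $F(i)=iF(i-1)+d^2$. One then checks that the $k=j+1$ term dominates the sum, that the resulting requirement $2^{d(d+1)}(d-j)(\log n)^{6-d^2}\le 2/(\log n+2\log^{(2)}n)$ reduces to $d(d+1)\le (\log^{(2)}n)(d^2-8)$ --- which is exactly where the hypothesis $d\le \frac{\log^{(2)}n}{4\log^{(3)}n}$ is used --- and finally that $F(i)\le d^2(i+2)!$, so $\log n\cdot q_d\le (\log n)^{(d+4)!}$. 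Without this specific change to $f$ and the accompanying verification, the plan does not yet constitute a proof.
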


The above result is essentially the same as the corresponding statement in \cite{Kelsen1992} when $d=O(1)$. As mentioned before, 
the proof  follows
from a slight modification of the potential function of \cite{Kelsen1992}; it appears in Section \ref{seclarger}.

In the analysis of the running time below we will focus on the number of rounds of \algo{\textsc{SBL}} algorithm. The time for each 
round of \algo{\textsc{SBL}} is dominated by the time for running \algo{\textsc{BL}} in that round. Specifically, notice that the only other call we make is to $\algo{\textsc{KUW}}(H)$ when the number of vertices in $H$ is less than $1/p^2 = n^{2/{\log^{(3)}n}}$. In the worst case, the runtime of the algorithm is linear in the number of vertices, so we get an additional factor of $O(n^{2/{\log^{(3)}n}})$ in our overall runtime.

We begin by setting values of the parameters used in the algorithm:
\begin{itemize}
\item $p := 1/n^\alpha$,
\item $m := n^\beta$,
\item $\alpha := 1/\log^{(3)}n$,
\item $\beta := \frac{\log^{(2)}n}{8 (\log^{(3)}n)^2}$.
\end{itemize}

We will use the following form of the Chernoff bound (see, e.g., \cite{AlonSpencer}).
\newtheorem{lemma}{Lemma} 
\begin{lemma} \label{lem:Chernoff}
Let $X$ be a random variable taking value $1$ with probability $p \in [0,1]$ and value $0$ with probability $1-p$. Then the sum $X_1 + 
\ldots + X_n$ of i.i.d. copies of $X$ for $a > 0$ satisfies
\begin{align*}
\Pr[(X_1+\ldots+X_n) \leq pn -a] \leq e^{-a^2/2pn}.
\end{align*}
\end{lemma}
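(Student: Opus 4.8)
The plan is to apply the standard exponential-moment (Chernoff) method to the lower tail of $S := X_1 + \dots + X_n$. We may assume $p > 0$ (if $p = 0$ then $S \equiv 0$ and, since $a > 0$, the event in question has probability $0$), and, because $S \ge 0$, we may also assume $a \le pn$ (otherwise the left-hand side is already $0$). For any real parameter $t > 0$, rewrite the event $\{S \le pn - a\}$ as $\{e^{-tS} \ge e^{-t(pn-a)}\}$ and apply Markov's inequality to the nonnegative random variable $e^{-tS}$:
\begin{align*}
\Pr[S \le pn - a] \;\le\; e^{t(pn-a)}\, \mathbb{E}\!\left[e^{-tS}\right].
\end{align*}

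Next I would factor the moment generating function using independence. Since the $X_i$ are i.i.d.\ copies of $X$,
\begin{align*}
\mathbb{E}\!\left[e^{-tS}\right] \;=\; \left(\mathbb{E}\!\left[e^{-tX}\right]\right)^{n} \;=\; \bigl(1 - p + p e^{-t}\bigr)^{n} \;=\; \bigl(1 + p(e^{-t}-1)\bigr)^{n} \;\le\; \exp\!\bigl(np(e^{-t}-1)\bigr),
\end{align*}
the last inequality being $1 + x \le e^{x}$. Combining the two displays, for every $t > 0$,
\begin{align*}
\Pr[S \le pn - a] \;\le\; \exp\!\Bigl(t(pn-a) + np(e^{-t}-1)\Bigr).
\end{align*}
It then remains to pick a good value of $t$ and estimate the exponent. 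The one non-algebraic ingredient is the elementary inequality $e^{-t} \le 1 - t + \tfrac{t^2}{2}$, valid for all $t \ge 0$; this holds because $g(t) := 1 - t + \tfrac{t^2}{2} - e^{-t}$ satisfies $g(0) = g'(0) = 0$ and $g''(t) = 1 - e^{-t} \ge 0$ on $[0,\infty)$, so $g' \ge 0$ and hence $g \ge 0$ there. Taking $t := a/(pn)$ (legitimate since $p > 0$) and substituting $e^{-t} - 1 \le -t + \tfrac{t^2}{2}$ gives
\begin{align*}
t(pn-a) + np(e^{-t}-1) \;\le\; t(pn-a) + np\Bigl(-t + \tfrac{t^2}{2}\Bigr) \;=\; -ta + \frac{np\,t^2}{2} \;=\; -\frac{a^2}{pn} + \frac{a^2}{2pn} \;=\; -\frac{a^2}{2pn},
\end{align*}
so $\Pr[S \le pn - a] \le e^{-a^2/(2pn)}$, which is the claim.

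I do not anticipate any genuine obstacle: this is the classical Chernoff lower-tail bound and the argument above is fully self-contained apart from the one-line convexity fact about $e^{-t}$. The only place that benefits from a little care is avoiding an unnecessary exact minimization over $t$ — choosing $t = a/(pn)$ directly makes the final estimate of the exponent pure arithmetic. (If one minimized exactly, the optimal choice would be $t = \ln\frac{pn}{pn-a}$, yielding a sharper relative-entropy type exponent; since only the stated quadratic bound is needed later, I would not pursue this.) As the statement is quoted from a standard reference, one could alternatively simply cite it, but the three-line proof above keeps the section self-contained.
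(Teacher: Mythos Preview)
Your argument is correct and self-contained. The paper itself does not prove this lemma at all: it simply states it as a standard form of the Chernoff bound and cites a textbook (Alon--Spencer). So there is nothing to compare against on the paper's side; your exponential-moment derivation with the choice $t=a/(pn)$ and the elementary estimate $e^{-t}\le 1-t+t^2/2$ is a clean way to supply the missing proof if one wants the section to be self-contained, and the edge cases $p=0$ and $a>pn$ are handled correctly.
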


There are three kinds of events ($A$, $B$, and $C$) that can happen during the execution of the \algo{\textsc{SBL}} algorithm resulting in 
failure or high running
time. We will show that the union of these events has small probability by upper bounding each event separately and then applying the 
union bound. We use $\Pr[B] \leq \Pr[A] + \Pr[B| \neg{A}]$ and similarly for $\Pr[C]$, resulting in the bound
\begin{align}\label{eqn:union}
\Pr[A \vee B \vee C] \leq 3 \Pr[A] + \Pr[B | \neg{A}] + \Pr[C | \neg{A}].  
\end{align}

\begin{enumerate}
 \item With high probability in each round, the fraction of vertices colored is substantial
 and thus the number of rounds is small.
 \item The probability that a large hyperedge is ever fully marked in a round is small and thus all our applications of \algo{\textsc{BL}} algorithm are valid.
 \item  The probability that a run of \algo{\textsc{BL}} algorithm fails in some round is small.
\end{enumerate}

 We now prove these three claims.
 
(1) Denote the number of marked vertices in round $i$ of \algo{\textsc{SBL}} by $n_i$; thus $n_1 = n$, and in round $i$, the \algo{\textsc{BL}} algorithm is 
invoked on a 
hypergraph with $n_i - n_{i+1}$ vertices (the set of marked vertices). Then, for each $i$, by Lemma~\ref{lem:Chernoff}
we have:
\begin{align*}
\Pr[(n_i - n_{i+1}) \leq pn_i/2] \leq e^{-pn_i/8} \leq e^{-1/(8p)}.
\end{align*}
The inequality above holds because \algo{\textsc{SBL}} algorithm maintains that $n_i \geq 1/p^2$ and so $p n_i \geq 1/p$ for all $i$. 
If the above event holds in each round, then the smallest $r$ satisfying $(1-p/2)^r \leq \frac{1}{p^2n}$ is an upper
bound on the number of rounds. 
Setting
$r := \frac{2\log n}{p} $ gives an upper bound on the number of rounds. Then the probability of the event not
holding in some round becomes
$\frac{2 \log n}{p} \cdot  e^{-1/8p} \leq  \frac{1}{n^{\log n}}$, for sufficiently large $n$.

(2) Conditioning on the number of rounds being upper bounded by $r$, the probability that an edge of size at least $d+1$ is fully marked in some round is at most 
$r m p^{(d+1)}$. If we want this probability to be upper bounded by $1/n$ then we can take
\begin{align*}
d := \frac{\log (rmn)}{\log (1/p)} -1.
\end{align*}
Substituting the values of $r, p, m$ as chosen above we get:
\begin{align*}
d &= \frac{\log 2 + \log^{(2)}n}{\log 1/p} + \frac{\log m}{\log 1/p} + \frac{\log n}{\log 1/p} \\
&= \frac{(\log 2 + \log^{(2)}n) \log^{(3)}n}{\log n} + \frac{\beta (\log^{(3)}n) (\log n)}{\log n} + \frac{\log^{(3)}n \log{n}}{\log n} \\
&\leq 2 \beta \cdot \log^{(3)}n\\
&= \frac{\log^{(2)}n}{4 \log^{(3)}n}, 
\end{align*}
where the inequality holds for all sufficiently large $n$. 

(3) Theorem~\ref{thm:Kelsen} gives that the probability of failure of \algo{\textsc{BL}} algorithm in round $i$ is at most 
$\frac{1}{n_i^{\Theta(\log n_i \log^{(2)}n_i)}}$. Using $n_i \geq 1/p^2$, this probability is at most 
$\frac{1}{n^{\log n}}$ for sufficiently large $n$. Hence, conditioning on the number of rounds being upper bounded by $r$, the probability of any one round failing is upper bounded by $r \cdot \frac{1}{n^{\log n}} \leq \frac{1}{n^{(\log n) /2}}$.

Thus the total probability of failure using \eqref{eqn:union} is at most 
$\frac{3}{n^{\log n}} + \frac{1}{n} + \frac{1}{n^{(\log n)/2}} \leq 2/n$, for sufficiently large $n$. 

Now we account for the time taken by the algorithm. The first round takes time at most $(\log pn)^{d^d}$ (with high prob.), and the subsequent
rounds have the same upper bound. Thus the total time is bounded by $r (\log pn)^{d^d}$ (here we are upper bounding 
$(d+4)!$ somewhat crudely by $d^d$ which holds for all sufficiently large $n$).
For our choice of $d$ above we can upper bound this by

\begin{align*}
r (\log n)^{d^d} \leq r (\log n)^{(\log n)^{1/4}} = \frac{2\log n}{p} (\log n)^{(\log n)^{1/4}} = \\
 2n^{1/\log^{(3)}n} (\log n)^{(\log n)^{1/4}+1} \leq n^{2/\log^{(3)}n}, 
\end{align*}
for sufficiently large $n$. 

This completes the analysis.

\section{Analysis of \algo{\textsc{BL}}} \label{Sec3}
In this section, we present a streamlined analysis of \algo{\textsc{BL}} and show that it can accommodate for a larger $d$ while maintaining the running time of $O((\log n)^{(d+4)!})$. 

Before we describe the improvements in the analysis, we give a brief overview of the algorithm. In the first step, each vertex is marked independently at random with some probability $ p $. After the marking step, for any edge that is fully marked, we unmark all its vertices. We add the remaining marked vertices to the independent set and perform a cleanup operation in which we update the vertex and edge set (by trimming them), remove singleton edges and discard all edges that now contain smaller edges as subsets. We then recurse on this new hypergraph. For a pseudocode of the algorithm we refer the reader to Appendix \ref{pseudo}. 

Like usual, the general strategy in upper bounding the number of rounds necessary for the algorithm to finish, is to define an appropriate quantity and show that progress is being made in each round. Intuitively, we can pick one of several such quantities (the number of vertices, the maximum degree of a vertex, the number of edges etc.) and show that it is reduced by a constant fraction every couple of rounds. The trouble comes from the fact that, in the case of hypergraphs and of the \algo{\textsc{BL}} algorithm in particular, none of these quantities are easy to track. For example, the probability that a vertex gets discarded in one round depends on whether it was marked but never participated in a fully marked edge. When it comes to the degree of a vertex, more evolved measures are needed than in the classical graph case, since now, several vertices can participate together in multiple edges. In this context, we define some essential notation. Let $H = (V,E)$ be a hypergraph with dimension $d$. For $\emptyset \neq x \subseteq V$ and an integer $j$ with $1 \leq j \leq d-|x|$, we define the number of edges of size $|x|+j$ that include $x $ as a subset:
\begin{center}
$N_j(x,H) = \{ y \subseteq V : x \cup y \in E \wedge x \cap y = \emptyset \wedge |y|=j \}$.
\end{center}
We also define the normalized degree of $x$ with respect to dimension $|x|+j$ edges
\begin{center}
$d_j(x,H) = (|N_j(x,H)|)^{1/j}$.
\end{center} 
The maximum normalized degree with respect to dimension $i$ edges then becomes
\begin{center}
$\Delta_i(H) = \max\{ d_{i-|x|}(x,H): x \subseteq V \wedge 0 < |x| < i\}$.
\end{center} 
Finally, the maximum normalized degree is defined as
\begin{center}
$\Delta(H) = \max \{ \Delta_i(H): 2 \leq i \leq d \}$.
\end{center} 

At this point, notice that, as noted in \cite{Kelsen1992}, the main bottleneck in the analysis is the migration of higher dimensional edges to lower dimensional ones. Specifically, in each round, we need to account for the decrease in $N_j(x,H)$ due to edges of size $|x|+j$ decreasing in size, but also for the potential increase due to edges of size $|x|+k$, $k>j$ becoming edges of size $|x|+j$. In order to upper bound such an increase, \cite{Kelsen1992} develops a bound on the upper tail of sums of dependent random variables defined on the edges of a hypergraph. We mention the general bound here and defer the description of its application to later in the paper. 

In order to state the result, we need to describe the probabilistic setting: we consider a hypergraph $H = (V(H),E(H))$ with $n(H)$ vertices, $m(H)$ edges and dimension $dim(H)$. We also consider a weight function $w$ on its edges $w(e)>0$ for any edge $e$. The random variables $C_v$ will correspond to each vertex being colored independently at random with probability $p$ for the color blue and $1-p$ for the color red. Alternatively, the random variable will take the value $1$ with probability $p$ and value $0$ with probability $1-p$. The random variable whose upper tail we will bound will be expressed as the polynomial $S(H,w,p)$. The terms of this polynomial will correspond to an edge $e$ being fully colored blue $C_e = \prod_{v \in e} C_v$. The weights $w(e)$ will become the corresponding coefficients. The polynomial $S(H,w,p)$ then represents the sum of all the weighted edges being colored blue:
\begin{center}
 $S(H,w,p) = \sum\limits_{e \in E(H)} w(e) \cdot C_e$.
\end{center}
Unlike general concentration bounds, we will not compare $S(H,w,p)$ just against its expectation. We will, instead, consider the expected values of all partial derivatives of the polynomial $S(H,w,p)$ with respect to subsets of vertices. Specifically, for a given $x \subseteq V(H)$, we will consider quantities of the form
\begin{center}
$P(H,w,p,x) = \sum\limits_{\substack{e \in E(H)\\ x \subseteq e}} w(e) \cdot p^{|e|-|x|}$.
\end{center}
 Essentially, this term represents the expected sum of the weighted edges around $x$ that are colored all blue, given that $x$ is already colored blue. Notice that this is the same setting used by more recent and considerably better concentration inequalities (e.g. \cite{KimVu},\cite{SchudySviridenko}) to describe their results and in that sense, Kelsen's bound is surprisingly advanced. We then define:
\begin{center}
 $D(H,w,p) = \max \{P(H,w,p,x) : x \subseteq V(H)\}.$
 \end{center}
Notice that $D(H,w,p)$ is greater than the expectation of $S(H,w,p)$. The final result follows:

\begin{theorem} (Theorem $1$ in \cite{Kelsen1992})\label{thm:Kelsenbound}
Let $(H,w)$ be a weighted hypergraph with $dim(H) = d > 0$ and $n(H)=n \geq 3$. For $0 < p \leq 1$ and $\delta > 1$, we have
	\begin{center}
	 $\Pr[S{(H,w,p)} > k(H) \cdot D(H,w,p) ] < p(H)$
	\end{center}
where 
\begin{center}
	$k(H) = ( \log n + 2)^{2^d-1} \cdot \delta^{2^{d-1}}$ and \\
	$p(H) = ( 2^d \cdot \lceil \log n \rceil \cdot m(H))^{d-1} \cdot \log n \cdot (\frac{4e}{\delta-1})^{(\delta-1)/4}$.
\end{center}
\end{theorem}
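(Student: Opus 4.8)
The plan is to prove Theorem~\ref{thm:Kelsenbound} by induction on the dimension $d = \dim(H)$: the base case is a Chernoff bound, and the inductive step conditions on the colors of part of the vertex set in order to lower the degree of the polynomial.

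\emph{Base case $d = 1$.} Here every edge is a singleton, so $S(H,w,p) = \sum_{e \in E(H)} w(e)\,C_e$ is a weighted sum of independent indicators. For a singleton edge $e = \{v\}$ one has $P(H,w,p,\{v\}) = w(e)$, so every individual term is bounded by $D(H,w,p)$; also $\mathbb{E}[S] = P(H,w,p,\emptyset) \le D(H,w,p)$. Hence $S / D(H,w,p)$ is a sum of independent $[0,1]$-valued variables of mean at most $1$, and a multiplicative Chernoff bound controls its upper tail. It then remains only to check, by a direct computation, that at the threshold $k(H) = (\log n + 2)\,\delta$ this tail probability is at most $\log n \cdot \bigl(\tfrac{4e}{\delta - 1}\bigr)^{(\delta - 1)/4}$, which is precisely $p(H)$ when $d = 1$.

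\emph{Inductive step.} Assume the bound for all weighted hypergraphs of dimension $< d$ and let $\dim(H) = d \ge 2$. Pick a subset $W$ of the vertices (taking $W$ to be about half of $V(H)$ works) and write $S = S_{\bar W} + T$, where $S_{\bar W}$ is the part of the polynomial on edges disjoint from $W$ and $T$ is the part on edges meeting $W$. Conditioned on the coloring of $W$, every edge contributing to $T$ has shed at least one vertex, so $T$ becomes a polynomial of degree $\le d - 1$ in the colors of $\bar W$, to which the inductive hypothesis applies --- with $\delta$ replaced by a suitable power (ultimately $\delta^2$ at each level, so that after $d - 1$ reductions the exponent of $\delta$ reaches $2^{d - 1}$). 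The edges disjoint from $W$ still form a dimension-$d$ hypergraph on $|\bar W| \le n/2$ vertices, so I would recurse on $S_{\bar W}$ with half as many vertices; because $D$ can only shrink when edges are deleted, this ``horizontal'' recursion does not inflate the $k$-parameter and merely contributes a recursion depth of $\Theta(\log n)$, which is where the $\lceil\log n\rceil$-to-powers and the $\log n$ multiplier in $p(H)$ come from (and the surviving $n$-dependence only improves). A union bound over the $\le 2^{|e|} \le 2^d$ ways an edge can split across $W$ and over the $\le m(H)$ edges accounts for the $(2^d\,\lceil\log n\rceil\,m(H))^{d-1}$ prefactor.

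\emph{The main obstacle.} The genuinely delicate step is controlling $D$ of the polynomial $T$ \emph{after} conditioning on the coloring of $W$. Grouping the edges of $T$ by their trace $e' = e \setminus W$, the coefficient of $C_{e'}$ is the random quantity $\sum_{f} w(e' \cup f)\,C_f$, the sum over nonempty $f \subseteq W$ with $e' \cup f \in E(H)$; its expectation over the coloring of $W$ is $\sum_{f} w(e' \cup f)\,p^{|f|}$, which is at most $P(H,w,p,e') \le D(H,w,p)$. But to invoke the inductive hypothesis for $T$ one needs all of these coefficient sums to be close to their means simultaneously, and \emph{that} is itself an upper-tail bound for polynomials of degree $\le d - 1$ --- so it is handled by a \emph{second} application of the inductive hypothesis (contributing its own $k$-factor, which is exactly what lifts the exponent of $(\log n + 2)$ from $2^{d-1} - 1$ up to $2^d - 1$), together with a union bound over the $\le m(H)$ traces $e'$. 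Finally I would combine, via one union bound, the failure events of the two same-dimension recursive calls on the halves of $V(H)$, of the conditional degree-$(d{-}1)$ bound on $T$, and of the coefficient-control estimate, and then track how $k(\cdot)$ multiplies and how $\delta$ is substituted across the $d - 1$ dimension reductions and the $\Theta(\log n)$ halvings; this yields exactly $k(H) = (\log n + 2)^{2^d - 1}\,\delta^{2^{d-1}}$ and summed failure probability $p(H)$. The constant-chasing is routine bookkeeping; the substantive content is the degree reduction by conditioning together with the control of the partial-derivative quantities $P(H,w,p,x)$, which is precisely why the statement is framed in terms of $D(H,w,p)$ and not just $\mathbb{E}[S]$.
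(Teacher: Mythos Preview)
The paper does not prove this statement at all: Theorem~\ref{thm:Kelsenbound} is quoted verbatim from Kelsen~\cite{Kelsen1992} and used as a black box, so there is no ``paper's own proof'' to compare your attempt against. Your outline---induction on $d$ with a Chernoff base case, halving the vertex set to get a $\Theta(\log n)$-deep ``horizontal'' recursion, conditioning on the colors of one half to drop the degree, and a second inductive call to control the conditioned partial-derivative quantities $P(\cdot,\cdot,\cdot,x)$---is in fact the skeleton of Kelsen's original argument, and the way you account for the exponents $2^d-1$, $2^{d-1}$ and the prefactor $(2^d\lceil\log n\rceil\, m(H))^{d-1}$ is consistent with how that recursion unwinds. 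So as a reconstruction of the cited proof your sketch is on the right track; just be aware that in the context of \emph{this} paper the theorem is an imported tool, not something re-proved here.
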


We are now ready to describe the complete analysis. In order to prove \textbf{Theorem \ref{thm:Kelsen}}, we present a succinct version of the analysis that emphasizes the main ingredients of the proof and our contribution. For full details of the original analysis, we refer the reader to the papers of Beame and Luby \cite{Beame1990} and Kelsen \cite{Kelsen1992}. In the following subsection, we will revisit some of the tools used and show that the analysis goes through even when we consider a higher sampling probability. 

\subsection{Theorem ~\ref{thm:Kelsen}} \label{seclarger}

The main purpose of \textbf{Theorem \ref{thm:Kelsen}} is to show that the analysis follows even when we allow $d \leq \frac{\log^{(2)}n}{4 \log^{(3)}n}$. We start by setting the initial sampling probability to $ p = 1/(a\Delta)$ where $a  = 2^{d+1}$.
The first crucial step is lower bounding the probability that a particular set of vertices $X$ is added to the independent set. We begin by defining random variables $C_v$ for when a vertex $v$ is initially marked (i.e. $C_v=1$ when the $v$ is marked and $0$ otherwise) and $E_v$ for when a vertex is unmarked later due its participation in fully marked edges (i.e. $E_v=1$ when $v$ is unmarked, $0$ otherwise). We also define the random variable $A_v = C_v \wedge \neg{E_v}$  to stand for when the vertex $v$ gets added to the independent set. This notion can be extended to subsets of vertices, by defining $C_X = \bigwedge_{v \in X} C_v$ and $E_X = \bigvee_{v \in X} E_v$, and $A_X = C_X \wedge \neg{ E_X}$. Notice that
\begin{center}
$\Pr[A_X] = \Pr[C_X] \cdot (1-\Pr[E_X | C_X])$.
\end{center}  \emph{Lemma 1} from \cite{Beame1990} shows that $\Pr[A_X] > 1/2 \cdot p^{|X|}$ by proving that $
\Pr[E_X | C_X] < 1/2$. 

\begin{lemma} (Lemma $1$ in \cite{Beame1990})
Given a hypergraph $H=(V,E)$ of dimension $d$, and a set of vertices $X \subseteq V$ with $|X|< d$ such that no $e \subset X$ is an edge, we have 
	$\Pr[E_X | C_X] < 1/2$. (I.e. given that $X$ is marked, it will be added to the IS with probability  $> 1/2$.)
\end{lemma}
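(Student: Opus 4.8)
The plan is a two-level union bound: first over the vertices of $X$ that could become unmarked, and then over the edges responsible, organized according to how each such edge meets $X$.

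First I would observe that, conditioned on $C_X$, the event $E_X$ occurs precisely when some edge $e$ with $e \cap X \neq \emptyset$ is fully marked, since a vertex of $X$ is unmarked only because it lies in a fully marked edge. A union bound over these edges then gives
\[
\Pr[E_X \mid C_X] \;\leq\; \sum_{\substack{e \in E \\ e \cap X \neq \emptyset}} \Pr[\,e \text{ fully marked} \mid C_X\,].
\]
Conditioned on $C_X$, the vertices of $e$ that lie in $X$ are already marked and the remaining $|e|-|e\cap X|$ vertices are marked independently with probability $p$, so the term for $e$ is exactly $p^{\,|e|-|e\cap X|}$. The hypothesis that no $e\subseteq X$ is an edge is what guarantees $|e\setminus X|\geq 1$ for every edge in this sum, so no exponent is $0$ (an edge contained in $X$ would be fully marked whenever $C_X$ holds, forcing $\Pr[E_X\mid C_X]=1$).

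Next I would group these edges by $x:=e\cap X$, a nonempty subset of $X$, and by $j:=|e\setminus X|\geq 1$; this partitions $\{e:e\cap X\neq\emptyset\}$. For fixed $x$ and $j$, each such edge $e$ determines $y=e\setminus x\in N_j(x,H)$, so the number of them is at most $|N_j(x,H)| = d_j(x,H)^{j} \leq \Delta_{|x|+j}(H)^{j} \leq \Delta^{j}$, where the use of $\Delta$ is legitimate because $|x|\geq 1$ and $j\geq 1$ force $2\leq |x|+j\leq d$. Since $p = 1/(a\Delta)$ with $a=2^{d+1}$, this yields
\[
\sum_{e:\, e\cap X = x} p^{\,|e|-|x|} \;\leq\; \sum_{j\geq 1} \Delta^{j}p^{j} \;=\; \sum_{j\geq 1} a^{-j} \;=\; \frac{1}{a-1}.
\]

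Finally I would sum over the nonempty subsets $x\subseteq X$. Because $|X|<d$, there are at most $2^{|X|}-1 < 2^{\,d-1}$ of them, and hence
\[
\Pr[E_X \mid C_X] \;<\; \frac{2^{\,d-1}}{a-1} \;=\; \frac{2^{\,d-1}}{2^{\,d+1}-1} \;<\; \frac12,
\]
the last step holding for every $d\geq 1$. I do not expect a genuine obstacle: the only things needing care are the bookkeeping of the intersection patterns $e\cap X$ when an edge meets $X$ in several vertices, and placing the two structural hypotheses correctly --- ``$|X|<d$'' to cap the number of subsets by $2^{d-1}$, and ``no $e\subseteq X$ is an edge'' to eliminate the would-be $j=0$ term.
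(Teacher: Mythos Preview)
Your argument is correct for the stated value $a=2^{d+1}$, but it differs from the Beame--Luby proof in a way worth noting. Beame and Luby set $f(X):=\Pr[E_X\mid C_X]$ and first establish the subadditivity $f(X\cup\{t\})\leq f(X)+f(\{t\})$, then bound a \emph{single} vertex by $f(\{t\})\leq\sum_{k\geq 1}|N_k(\{t\},H)|\,p^k<1/(a-1)$; summing over the $|X|<d$ vertices of $X$ gives $f(X)<|X|/(a-1)<d/(a-1)$. Your decomposition by the exact intersection pattern $x=e\cap X$ instead sums over all $2^{|X|}-1<2^{d-1}$ nonempty subsets of $X$, yielding the weaker bound $f(X)<2^{d-1}/(a-1)$. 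Both close at $1/2$ when $a=2^{d+1}$, but the Beame--Luby route is what makes the sharper choice $a=\Theta(d)$ possible (this is exactly what the paper exploits later when discussing a larger sampling probability $p=1/(5d\,\Delta)$). So your proof is fine for the lemma as stated; the original argument just buys a linear-in-$d$ rather than exponential-in-$d$ dependence, which matters for the super-constant-dimension regime the paper cares about.
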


We will use the preceding lemma to ensure that progress is being made at each stage of the algorithm. Specifically, we will focus our attention on those sets $X$ that have a large degree with respect to edges of size $|X|+j$. To this extent, \emph{Lemma 2} in \cite{Beame1990}) shows that if such a large degree set exists, then one of the edges that contains it is likely to decrease and turn $X$ into an edge by itself. Once that event occurs, the degree of $X$ becomes $0$.
\begin{lemma} \label{lemma2} (Lemma $2$ in \cite{Beame1990}) For any set of vertices $X$ and $j$ such that $|X| + j \leq d$, if $d_j(X,H) \geq \epsilon \Delta$, then
\begin{center}
 $Pr[\exists Y \in N_j(X,H) : A_Y] \geq \frac{1}{4} (\epsilon/a)^j$,
\end{center}
where $a = 2^{d+1}$.

\end{lemma}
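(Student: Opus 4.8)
The plan is to run a second moment argument on the number of sets $Y \in N_j(X,H)$ that get added to the independent set in this round. First I would dispose of the easy regime $(\epsilon\Delta)^j < 1$: then $\epsilon\Delta \le 1$, so $p^j = (a\Delta)^{-j} \ge (\epsilon/a)^j$, and picking any single $Y \in N_j(X,H)$ (nonempty, since $d_j(X,H) > 0$) and invoking the preceding lemma gives $\Pr[A_Y] = \Pr[C_Y]\,(1 - \Pr[E_Y \mid C_Y]) > \tfrac12 p^{|Y|} \ge \tfrac12 (\epsilon/a)^j$, which already exceeds the claimed bound. Here the preceding lemma applies because $|Y| = j \le d - |X| < d$ (as $|X|\ge 1$), and because $Y$ is a proper subset of the edge $X \cup Y$ while the hypergraph maintained by \algo{\textsc{BL}} is an antichain (its cleanup step discards every edge containing a smaller edge), so no proper subset of $Y$ is an edge. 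So assume from now on that $(\epsilon\Delta)^j \ge 1$.

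Since $|N_j(X,H)| = d_j(X,H)^j \ge (\epsilon\Delta)^j$ is an integer, I would fix a subfamily $\mathcal{Y} \subseteq N_j(X,H)$ with $N := |\mathcal{Y}| = \lceil (\epsilon\Delta)^j \rceil$, so that $(\epsilon\Delta)^j \le N \le 2(\epsilon\Delta)^j$, and lower bound $\Pr[\exists Y \in \mathcal{Y}: A_Y]$; this suffices. Setting $Z = \sum_{Y \in \mathcal{Y}} \mathbf{1}[A_Y]$ and using the pointwise inequality $\mathbf{1}[Z \ge 1] \ge Z - \binom{Z}{2}$, valid for every nonnegative integer $Z$, gives $\Pr[Z \ge 1] \ge \mathbb{E}[Z] - \tfrac12 \sum_{Y \ne Y' \in \mathcal{Y}} \Pr[A_Y \wedge A_{Y'}]$. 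For the first moment, the preceding lemma again yields $\Pr[A_Y] > \tfrac12 p^j$ for each $Y \in \mathcal{Y}$ (the hypotheses hold as above), hence $\mathbb{E}[Z] \ge \tfrac12 N p^j \ge \tfrac12 (\epsilon\Delta p)^j = \tfrac12 (\epsilon/a)^j$, using $p = 1/(a\Delta)$.

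The crux is the pair sum. Since $A_Y \wedge A_{Y'}$ forces every vertex of $Y \cup Y'$ to be marked — an independent event — we have $\Pr[A_Y \wedge A_{Y'}] \le p^{|Y \cup Y'|} = p^{2j - t}$ with $t := |Y \cap Y'|$. To count, for a fixed $Y \in \mathcal{Y}$, the number of $Y' \in N_j(X,H)$ with $|Y \cap Y'| = t$, I would observe that any such $Y'$ contains some $t$-subset $S \subseteq Y$ and then equals $S \cup W$ with $W \in N_{j-t}(X \cup S, H)$; moreover $2 \le |X| + j \le d$ and $0 < |X \cup S| = |X| + t < |X| + j$, so the definition of the normalized degree gives $|N_{j-t}(X \cup S, H)| = d_{j-t}(X \cup S, H)^{\,j-t} \le \Delta_{|X|+j}(H)^{\,j-t} \le \Delta^{\,j-t}$. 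Summing over the $\binom{j}{t}$ choices of $S$, then over $t = 0, \dots, j-1$, and over $Y\in\mathcal Y$, and applying the binomial theorem together with $N \le 2(\epsilon\Delta)^j$ and $p = 1/(a\Delta)$, I obtain $\sum_{Y \ne Y' \in \mathcal{Y}} \Pr[A_Y \wedge A_{Y'}] \le N p^j \sum_{t=0}^{j-1} \binom{j}{t} (\Delta p)^{j-t} = N p^j\big((1+\Delta p)^j - 1\big) \le 2 (\epsilon/a)^j \big((1+\Delta p)^j - 1\big)$.

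Feeding the two moments into the Bonferroni bound gives $\Pr[Z \ge 1] \ge (\epsilon/a)^j\big(\tfrac32 - (1+\Delta p)^j\big)$, so it only remains to show $(1+\Delta p)^j \le \tfrac54$. This is where the oversized constant $a = 2^{d+1}$ earns its keep: $\Delta p = 1/a = 2^{-(d+1)}$ and $1 \le j \le d-1$, so $(1+\Delta p)^j \le \exp\!\big((d-1)2^{-(d+1)}\big) \le e^{1/8} < \tfrac54$ for all $d \ge 2$, the case $d \le 1$ being vacuous. Hence $\Pr[\exists Y \in N_j(X,H) : A_Y] \ge \tfrac14 (\epsilon/a)^j$, as claimed. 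I expect the pair sum to be the real fight: a careless estimate of $\sum_t \binom{j}{t}(\Delta p)^{j-t}$ blows up once $j$ is allowed to grow with $d$ (as it must here, $j$ being as large as $d-1$), and it is only the exponentially small $\Delta p$ — forced precisely by $a = 2^{d+1}$ in the sampling probability — that keeps $(1+\Delta p)^j$ bounded; balancing these is the delicate point, whereas the first-moment and union-bound pieces are routine.
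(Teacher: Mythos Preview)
Your proof is correct and is essentially the same second-moment argument as the paper's (the paper itself only cites \cite{Beame1990} here, but the suppressed proof for the variant $a=5d$ later in the source follows Beame--Luby): both reduce to the key count $\sum_{t=0}^{j-1}\binom{j}{t}(\Delta p)^{j-t}=(1+1/a)^j-1$ via the degree bound $|N_{j-t}(X\cup S,H)|\le \Delta^{j-t}$, and both finish by observing that $a=2^{d+1}$ makes this quantity bounded for all $j\le d-1$. The only packaging differences are that the paper lower-bounds by the \emph{unique}-$Y$ event and estimates $\Pr[A_Z\mid A_Y]\le 2\Pr[C_{Z\setminus Y}]$ through Lemma~1, whereas you go straight through Bonferroni and the cruder joint bound $\Pr[A_Y\wedge A_{Y'}]\le p^{|Y\cup Y'|}$; these are equivalent here. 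Your restriction to a subfamily of size $\lceil(\epsilon\Delta)^j\rceil$ and the preliminary case $(\epsilon\Delta)^j<1$ are harmless but unnecessary: working with the full $N_j(X,H)$, the pair sum is still at most $|N_j(X,H)|\,p^j\big((1+1/a)^j-1\big)$, which is already a constant fraction of the first moment, so the Bonferroni bound goes through without trimming.
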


We now discuss the last ingredient of the proof: the upper bound on the migration of edges from higher dimensions to lower dimensions. Notice that the previous lemma is not enough to show that the degree of $X$ will become $0$ in a polylogarithmic number of stages. This is because over each stage, $d_j(X,H)$ can actually increase through the migration of edges from $N_k(X,H)$ where $k>j$. In this context, we employ \textbf{Theorem \ref{thm:Kelsenbound}}. The hypergraph $H'$ we construct consists of all the vertices in $H$ and has as edges all subsets of size $k-j$ of the elements in $N_k(X,H)$, i.e all the potential ways in which an edge of size $|X|+k$ can lose $k-j$ vertices and become an edge of size $|X|+j$ around $X$. Formally, let $X_{j,k}$ be the edge set:
\begin{center}
	$X_{j,k} = \{ Y : Y \subseteq V(H') \wedge |Y| = k-j \wedge \exists Z \in N_k(X,H'), Y\subseteq Z\}$.
\end{center}

The random variables $C_v$ correspond to the situation in which a vertex $v$ gets marked, with probability $p$.  The weight $w'$ of each edge $Y \in X_{j,k}$ represents the number of edges of size $|X| + j$ around $X$ that would be formed if $Y$ were to be fully added to the MIS. Formally:
\begin{center}
 $w'(Y) = |N_j(X \cup Y, H')|$.
\end{center}
 The polynomial $S(H',w',p)$ then becomes an upper bound on the potential increase in $N_j(X,H)$ due to edges in $N_k(X,H)$. Notice that, in our case, $H'$ has dimension at most $d-1 <  \frac{\log^{(2)}n}{4 (\log^{(3)}n)}$, but by choosing $\delta = \log^2 n$, we can arrive at a cleaner formulation of \emph{Theorem 1} from \cite{Kelsen1992}:
\newtheorem{corollary}{corollary} 
\begin{corollary} (Corollary 1 in \cite{Kelsen1992}) Fix a $d > 0$ and a real number $p$, $0<p\leq 1$. For any weighted hypergraph $(H',w)$ of dimension at most $d$ with at most $n$ vertices,
	\begin{center}
	 $\Pr[S{(H',w',p)} > (\log n)^{2^{d+1}} \cdot D(H',w',p) ] < \frac{1}{n^{\Theta(\log n \cdot \log \log n)}}$.
	\end{center}
\end{corollary}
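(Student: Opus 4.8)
The plan is to obtain the corollary as a direct specialization of Theorem~\ref{thm:Kelsenbound}: apply that theorem to the weighted hypergraph $(H',w)$ with the choice $\delta = \log^2 n$, and then simplify the two resulting quantities $k(H')$ and $p(H')$. Throughout, $d$ is a fixed constant, so any quantity depending only on $d$ (such as $2^{2^d}$) is eventually dominated by every growing function of $n$; moreover both $k(\cdot)$ and $p(\cdot)$ are monotone nondecreasing in the dimension, so it suffices to bound them in the case that $H'$ has dimension exactly $d$.

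First I would handle the threshold. With $\delta = \log^2 n$ we get $k(H') = (\log n + 2)^{2^d-1}(\log^2 n)^{2^{d-1}} = (\log n+2)^{2^d-1}(\log n)^{2^d}$. For $n$ large enough that $\log n + 2 \le 2\log n$ and $2^{2^d-1} \le \log n$, this is at most $2^{2^d-1}(\log n)^{2^d-1}(\log n)^{2^d} = 2^{2^d-1}(\log n)^{2^{d+1}-1} \le (\log n)^{2^{d+1}}$. Consequently $\{S(H',w,p) > (\log n)^{2^{d+1}}D(H',w,p)\} \subseteq \{S(H',w,p) > k(H')D(H',w,p)\}$, so the probability asked for in the corollary is bounded above by $p(H')$.

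Next I would estimate $p(H') = (2^d\lceil\log n\rceil\, m(H'))^{d-1}\cdot\log n\cdot\bigl(\tfrac{4e}{\delta-1}\bigr)^{(\delta-1)/4}$. Since $H'$ has at most $n$ vertices and dimension at most $d$, we have $m(H') \le \sum_{i=1}^{d}\binom{n}{i} \le n^{d}$ for large $n$, so the prefactor $(2^d\lceil\log n\rceil\, m(H'))^{d-1}\log n$ is a fixed polynomial in $n$. For the remaining factor, with $\delta - 1 = \log^2 n - 1$, taking logarithms gives $\tfrac{\delta-1}{4}\log\tfrac{4e}{\delta-1} = -(1-o(1))\tfrac{\log^2 n}{2}\log\log n$, so that factor equals $2^{-(1-o(1))\log^2 n\,\log\log n/2}$; using the elementary identity $2^{\log^2 n} = n^{\log n}$ this is $n^{-\Theta(\log n\cdot\log\log n)}$. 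Multiplying by the polynomial prefactor only shifts the hidden constant in the exponent, so $p(H') < n^{-\Theta(\log n\cdot\log\log n)}$, which is exactly the claimed bound.

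There is no real difficulty here — the statement is a computation — but two points deserve care. The first is the direction of the comparison: we must verify $k(H') \le (\log n)^{2^{d+1}}$ so that the clean event of the corollary is \emph{contained} in the event controlled by Theorem~\ref{thm:Kelsenbound}, not the reverse. The second is confirming that the polynomial prefactor of $p(H')$ is genuinely absorbed into $n^{-\Theta(\log n\log\log n)}$; this is precisely where we invoke the a priori bound $m(H') \le n^{d}$, which in turn relies on the hypothesis that the dimension is at most the fixed constant $d$. All the asymptotic statements need only that $n$ be large for that fixed $d$, which is consistent with the $\Theta(\cdot)$ notation in the statement.
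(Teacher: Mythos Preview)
Your proposal is correct and is exactly the argument the paper has in mind: the paper states outright that the corollary follows from Theorem~\ref{thm:Kelsenbound} by choosing $\delta = \log^2 n$, and you have carried out the routine verification that this choice yields $k(H') \le (\log n)^{2^{d+1}}$ and $p(H') \le n^{-\Theta(\log n \log\log n)}$. The paper itself supplies only the choice of $\delta$ and does not spell out these estimates, so your write-up is, if anything, more detailed than what appears there.
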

When it comes to $D(H',w',p)$, we can bound it by something more meaningful in our context:
\begin{lemma} (Lemma 3 in \cite{Kelsen1992}) Let $H',w'$ and $p$ be defined as above. Then:
	\begin{center}
		$D(H',w',p) \leq (\Delta_{|X|+k}(H))^j.$
	\end{center}
\end{lemma}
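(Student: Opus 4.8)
The plan is to compute $P(H',w',p,x)$ in closed form for every $x\subseteq V(H')$ and then to read off the bound from the choice $p=1/(a\Delta)$, $a=2^{d+1}$; this is a purely algebraic statement about the hypergraph and does not rely on the concentration inequality of Theorem~\ref{thm:Kelsenbound}. Recall that $V(H')=V(H)$, that every edge of $H'$ has cardinality $k-j$, and that $w'(Y)=|N_j(X\cup Y,H)|$ (here ``$N_k(X,H')$'' in the definition of $X_{j,k}$ should be read as $N_k(X,H)$, since $H'$ has no edges of size $|X|+k$). First I would dispose of the trivial cases: if $x$ meets $X$, or if $|x|>k-j$, then no $Y\in X_{j,k}$ contains $x$, so $P(H',w',p,x)=0$ and the claimed bound holds vacuously. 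Hence assume $x\cap X=\emptyset$ and $|x|\le k-j$. Since $|Y|=k-j$ for every $Y\in X_{j,k}$, we have $P(H',w',p,x)=p^{\,k-j-|x|}\sum_{Y\in X_{j,k},\,x\subseteq Y}w'(Y)$, and unfolding $w'$ turns this sum into the number of pairs $(Y,y)$ with $Y\in X_{j,k}$, $x\subseteq Y$, $|y|=j$, $y$ disjoint from $X\cup Y$, and $X\cup Y\cup y\in E(H)$.

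The heart of the argument is then a double count: I would group these pairs by the edge $e:=X\cup Y\cup y$ of $H$. Each such $e$ has size $|X|+k$, contains $X$, and $W:=e\setminus X$ has size $k$ and is disjoint from $X$; conversely, for a fixed such $e$ with $x\subseteq W$, the pairs mapping to it are exactly $(Y,W\setminus Y)$ as $Y$ ranges over the $(k-j)$-subsets of $W$ that contain $x$, and each such $Y$ does lie in $X_{j,k}$ (witnessed by $Z=W$). There are $\binom{k-|x|}{k-j-|x|}$ of these, while the relevant edges $e$ are counted, via $W\setminus x$, by $\bigl|N_{k-|x|}(X\cup x,H)\bigr|$. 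This would give the identity
\[
P(H',w',p,x)=p^{\,k-j-|x|}\binom{k-|x|}{k-j-|x|}\,\bigl|N_{k-|x|}(X\cup x,H)\bigr|.
\]

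To finish, I would put $u:=X\cup x$ and $\Delta':=\Delta_{|X|+k}(H)$. Since $|x|\le k-j<k$ we have $0<|u|=|X|+|x|<|X|+k$, so the definition of $\Delta_{|X|+k}$ gives $\bigl|N_{k-|x|}(u,H)\bigr|=d_{k-|x|}(u,H)^{\,k-|x|}\le(\Delta')^{k-|x|}$. Also $2\le|X|+k\le d$, hence $\Delta'\le\Delta(H)=\Delta$ and $p\Delta'\le1/a$. Therefore
\[
P(H',w',p,x)\le\binom{k-|x|}{k-j-|x|}(p\Delta')^{k-j-|x|}(\Delta')^{j}\le\binom{k-|x|}{k-j-|x|}a^{-(k-j-|x|)}(\Delta')^{j}.
\]
Writing $s:=k-j-|x|\ge0$: when $s=0$ the prefactor equals $1$; when $s\ge1$ it is at most $2^{k-|x|}a^{-s}\le2^{d}a^{-s}<a^{\,1-s}\le1$. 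Either way $P(H',w',p,x)\le(\Delta')^{j}$, and maximizing over $x$ yields $D(H',w',p)\le(\Delta_{|X|+k}(H))^{j}$.

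The step I expect to be the main obstacle is making the double count airtight: one must verify that every $(k-j)$-set produced by the grouping genuinely belongs to $X_{j,k}$, that the correspondence $(Y,y)\leftrightarrow e$ is a bijection onto its image (no omissions and no overcounting), and that the sum over $e$ collapses exactly to $\bigl|N_{k-|x|}(X\cup x,H)\bigr|$, with the degenerate sub-cases $x=\emptyset$ and $|x|=k-j$ checked to be covered by the same identity. Everything after that is routine arithmetic, and it is precisely there that the value $a=2^{d+1}$ of the marking parameter is used; any $a\ge2^{d}$ would already do, so there is slack to spare.
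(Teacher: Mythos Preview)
Your proof is correct. The paper does not actually give a proof of this lemma; it merely quotes it as Lemma~3 of \cite{Kelsen1992} and moves on, so there is nothing in the paper itself to compare your argument against. Your double count---regrouping the pairs $(Y,y)$ by the ambient edge $e=X\cup Y\cup y$ of size $|X|+k$, reading off the multiplicity $\binom{k-|x|}{k-j-|x|}$, and then invoking $d_{k-|x|}(X\cup x,H)\le\Delta_{|X|+k}(H)$ together with $p\Delta_{|X|+k}(H)\le 1/a$---is exactly the natural way to establish the bound and is essentially Kelsen's argument. You were also right to flag that the two occurrences of $H'$ in the definitions of $X_{j,k}$ and $w'$ are typos for $H$; with $H'$ as written both objects would be vacuous.
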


Notice that the same bound applies when we consider the increase in the normalized degree $d_j(X,H')$ and since $\Delta_{|X|+k}(H) \leq \Delta$, we obtain the following \emph{Corollary 3} from \cite{Kelsen1992}:
\begin{corollary} \label{corr3} (Corollary 3 in \cite{Kelsen1992})  With high probability, for $2 \leq j \leq d$, the maximum increase in $d_{j-|X|}(X,H)$ for any non-empty $X \subseteq V$ during a single stage of the algorithm is less than
	\begin{center}
		$\displaystyle{\sum_{k>j}} (\log n)^{2^{k-j+1}} \cdot \Delta_k(H)$.
	\end{center}
\end{corollary}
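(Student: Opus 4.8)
The plan is to prove the bound for a fixed non-empty $X \subseteq V$ and a fixed target dimension $j$ with $2 \le j \le d$, and to finish with a union bound; throughout I use $j$ and $k$ as edge sizes (dimensions), matching the statement of the corollary. Fix also a source dimension $k$ with $j < k \le d$. The key combinatorial fact is that a size-$k$ edge $e = X \cup Z$ of $H$ with $X \cap Z = \emptyset$ (so $Z$ is one of the $k-|X|$-element sets counted by $N_{k-|X|}(X,H)$) can create a \emph{new} size-$j$ edge around $X$ during one stage only in the following way: no vertex of $X$ is removed (otherwise $e$ ceases to contain $X$), and exactly some $(k-j)$-element subset $Y \subseteq Z$ is removed --- that is, every vertex of $Y$ is added to the independent set --- and the resulting edge is $X \cup (Z \setminus Y)$. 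Since "$Y$ is added to the independent set" implies "$Y$ is marked", which is the event $C_Y := \prod_{v \in Y} C_v = 1$, and since for a fixed $Y$ the number of size-$j$ edges around $X$ that arise from removing $Y$ is at most the number of size-$k$ edges of $H$ containing $X \cup Y$, namely $|N_{j-|X|}(X \cup Y, H)|$, the increase of $N_{j-|X|}(X,H)$ attributable to size-$k$ edges during one stage is at most $S(H',w',p)$: the polynomial of Theorem~\ref{thm:Kelsenbound} on the auxiliary hypergraph $H'$ with vertex set $V$, whose edges are all such subsets $Y$ (the set $X_{j,k}$ in the notation preceding the corollary) and whose weight function is $w'(Y) = |N_{j-|X|}(X \cup Y, H)|$. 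The cleanup steps of \algo{\textsc{BL}} only delete edges, so they cannot invalidate this bound.

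Given this domination, the tail estimate is a black-box application of the concentration results already in hand. The hypergraph $H'$ has dimension $k - j$, so Corollary~1 of \cite{Kelsen1992} (Theorem~\ref{thm:Kelsenbound} specialized with $\delta = \log^2 n$) gives $S(H',w',p) \le (\log n)^{2^{k-j+1}} D(H',w',p)$ outside an event of probability $1/n^{\Theta(\log n \log^{(2)} n)}$, and Lemma~3 of \cite{Kelsen1992} gives $D(H',w',p) \le (\Delta_k(H))^{\,j-|X|}$ (the normalization exponent being the target gap $j-|X|$, and $\Delta_{|X|+(k-|X|)}(H)=\Delta_k(H)$). Hence, with high probability, the single-stage increase of $N_{j-|X|}(X,H)$ due to size-$k$ edges is at most $(\log n)^{2^{k-j+1}} (\Delta_k(H))^{\,j-|X|}$.

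To convert this into a statement about the normalized degree $d_{j-|X|}(X,H) = |N_{j-|X|}(X,H)|^{1/(j-|X|)}$, note that $t \mapsto t^{1/(j-|X|)}$ is subadditive on $[0,\infty)$ since $j - |X| \ge 1$: if $N_{j-|X|}(X,H)$ increases by $\Delta N$ then $d_{j-|X|}(X,H)$ increases by at most $(\Delta N)^{1/(j-|X|)}$, and $\Delta N$ itself is bounded by the sum over source dimensions $k > j$ of the contributions above (a new size-$j$ edge around $X$ must be a proper subset of a start-of-stage edge of some size $k>j$ containing $X$). Therefore the single-stage increase of $d_{j-|X|}(X,H)$ is at most
\[
\sum_{k>j} \big((\log n)^{2^{k-j+1}} (\Delta_k(H))^{\,j-|X|}\big)^{1/(j-|X|)} = \sum_{k>j} (\log n)^{2^{k-j+1}/(j-|X|)} \Delta_k(H) \le \sum_{k>j}(\log n)^{2^{k-j+1}} \Delta_k(H),
\]
which is exactly the claimed bound for this $X$.

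It remains to remove the conditioning on $X$, $j$, and $k$. There are at most $n^{d}$ relevant sets $X$ (only $|X| < d$ matters) and at most $d^2$ pairs $(j,k)$, so the total number of bad events is at most $n^{d} d^2$; since $d \le \frac{\log^{(2)} n}{4\log^{(3)} n}$ is tiny, $n^{d} d^2$ is far smaller than $n^{\Theta(\log n \log^{(2)} n)}$, and the union bound leaves an overall failure probability of $1/n^{\Theta(\log n \log^{(2)} n)}$. Thus all the bounds hold simultaneously with high probability, which proves the corollary. The step I expect to be the real work is the first one --- pinning down $H'$, $X_{j,k}$, and $w'$ precisely and verifying rigorously that $S(H',w',p)$ dominates the increase of $N_{j-|X|}(X,H)$, including the bookkeeping for edges that migrate across several dimensions in a single stage and the observation that the cleanup operations can only help; everything downstream is a routine combination of Corollary~1 and Lemma~3.
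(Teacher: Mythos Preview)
Your proposal is correct and follows essentially the same approach as the paper: construct the auxiliary weighted hypergraph $(H',w')$ on the $(k-j)$-subsets $Y$, observe that $S(H',w',p)$ dominates the increase of $N_{j-|X|}(X,H)$ coming from size-$k$ edges, apply Corollary~1 together with Lemma~3 to get the per-$k$ bound $(\log n)^{2^{k-j+1}}(\Delta_k(H))^{j-|X|}$, pass to normalized degrees via subadditivity of $t\mapsto t^{1/(j-|X|)}$, sum over $k>j$, and finish with a union bound over $X,j,k$. The paper's own proof is just the one-line remark preceding the corollary, so you have simply written out the details it leaves implicit; the extra care you flag about cleanup steps and multi-dimension migration is exactly right and does not change the argument.
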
 
Notice that this bound is meaningful in comparison with the trivial bound we would obtain by considering the worst case scenario of \textit{all} higher dimensional edges migrating down:
\begin{center}
 $(\sum_{k>j} \Delta_k(H)^{k-|X|})^{1/(j-|X|)} \geq \sum_{k>j} \Delta_k(H)$,
\end{center}
since $\Delta_k(H)$ could be as high as $n$. 

At this point in the analysis, we can describe the behaviour of each individual $d_j(X,H)$ by a lower bound on the probability that it diminishes when it is too large (\textbf{Lemma \ref{lemma2}}) and an upper bound on how much it can increase in each stage (\textbf{Corollary \ref{corr3}}). We would like to be able to somehow compare these quantities with a universal threshold that we can show will eventually decrease. The trouble comes from expressing the latter of the quantities in terms of this universal threshold: if we compare each $\Delta_k(H)$ to the threshold in the same way (suppose by saying that it is smaller than $1/2$ of the threshold value), we obtain a trivial upper bound on the increase in $\Delta_j(H)$. A solution to this problem would be to define an individual threshold for each $\Delta_k(H)$ separately and relate all of these back to a universal threshold. In this context, 
\cite{Kelsen1992} defines the values $v_i(H)$ inductively by $v_d(H) = \Delta_d(H)$ and:
\begin{center}
	$v_i(H) = \max \{ \Delta_i(H), (\log n)^{f(i)} \cdot v_{i+1}(H) \}$,
\end{center}
for $2 \leq i < d$, where $f$ is a carefully chosen function (to be defined later) that accommodates for the increase in $\Delta_j(H)$ due to migration from higher edges. Essentially, $v_i(H)$ tries to take into account the most significant term in the increased $\Delta_i(H)$: it is either the $\Delta_i(H)$ from the previous round or the most significant term from larger edges offset by a scaling factor $(\log n)^{f(i)} \cdot v_{i+1}(H)$. These individual thresholds relate to the universal threshold by considering the quantities $T_j = v_2(H) / (\log n)^{F(j-1)}$, where $F(i) = \sum_{j=2}^{i} f(j)$ for $2 \leq i \leq d$. Notice that for any hypergraph $H'$, $v_i(H') \leq v_2(H') / (\log n)^{F(j-1)}$. The rest of the analysis focuses on showing that the universal threshold $v_2(H)$ is reduced by a constant fraction every several rounds. 

Let $H_s = (V(H_s),E(H_s))$ be the hypergraph used in stage $s$ of the algorithm and let $v_i = v_i(H_{s_0})$ be the values of these potential functions at the start of a fixed stage of the algorithm. Similarly, let $T_j$ be defined with respect to $v_j$. The main technical lemma is the following:
\begin{lemma} (Lemma 4 in \cite{Kelsen1992}) Let $r$ be an arbitrary positive constant. Then, with high probability, at any stage $s$ with $s_0 \leq s \leq s_0 + (\log n)^r$, we have
	\begin{center}
		$v_2(H_s) \leq v_2 \cdot (1+o(1))$.
	\end{center}
\end{lemma}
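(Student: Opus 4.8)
\noindent\emph{Proof sketch.} The plan is to control each normalized maximum degree $\Delta_i(H_s)$ separately, by a \emph{downward} induction on the dimension index $i$ from $d$ to $2$, and then assemble the bounds using the recursive definition of the $v_i$. The only probabilistic input is Corollary \ref{corr3}; everything else is bookkeeping with $f$, $F$ and the thresholds $T_j$. First I would dispose of the top dimension. An edge of size $d$ containing a fixed non-empty $X$ can never be \emph{created} during a stage: around a fixed set, new edges of a given size arise only from larger edges around that set losing vertices to the independent set, and there is nothing above dimension $d$. Hence $N_{d-|X|}(X,\cdot)$ is non-increasing along the stages for every fixed $X$, so $\Delta_d(H_s)\le \Delta_d(H_{s_0})=v_d$ for all $s\ge s_0$, and therefore $v_d(H_s)\le v_d$.

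For the inductive step, fix $2\le i<d$ and suppose that, with high probability, $\Delta_k(H_s)\le v_k(1+o(1))$ for every $k>i$ and every $s$ in the window $[s_0,s_0+(\log n)^r]$. Fix a non-empty $X\subseteq V(H_s)$ with $|X|<i$; note that $V$ only shrinks, so $X$ is present at every stage of the window at or before $s$, and $d_{i-|X|}(X,\cdot)$ is well defined throughout. By Corollary \ref{corr3}, applied with target dimension $i$, in each individual stage the quantity $d_{i-|X|}(X,H)$ increases by less than $\sum_{k>i}(\log n)^{2^{k-i+1}}\Delta_k(H)$, which under the inductive hypothesis is at most $(1+o(1))\sum_{k>i}(\log n)^{2^{k-i+1}}v_k$. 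Summing over the at most $(\log n)^r$ stages, using $v_k\le v_{i+1}$ for all $k\ge i+1$ and that there are at most $d$ relevant values of $k$, the total increase of $d_{i-|X|}(X,\cdot)$ across the whole window is at most $d\,(\log n)^{\,r+2^{d-i+1}}\,v_{i+1}\,(1+o(1))$, a bound that does not depend on $X$. Since the hypothesis $d\le \tfrac{\log^{(2)}n}{4\log^{(3)}n}$ forces $d=o(\log n)$, choosing $f$ so that $f(i)$ exceeds $r+2^{d-i+1}$ by a fixed additive constant makes this increase $o\bigl((\log n)^{f(i)}v_{i+1}\bigr)=o(v_i)$, because $v_i\ge (\log n)^{f(i)}v_{i+1}$ by definition; more precisely the increase is $O\bigl(d(\log n)^{-2}\bigr)v_i$. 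Taking the maximum over $X$ (all such $X$ being already present at $s_0$) gives $\Delta_i(H_s)\le \Delta_i(H_{s_0})+o(v_i)\le v_i(1+o(1))$, and then $v_i(H_s)=\max\{\Delta_i(H_s),(\log n)^{f(i)}v_{i+1}(H_s)\}\le v_i(1+o(1))$, using the inductive bound on $v_{i+1}(H_s)$ and $(\log n)^{f(i)}v_{i+1}\le v_i$. Specializing to $i=2$ yields $v_2(H_s)\le v_2(1+o(1))$. One checks that all the $(1+o(1))$ factors produced along the way are of the common form $1+O(d(\log n)^{-2})=1+O(1/\log n)$, so they do not compound over the at most $d$ levels of the induction.

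The only place probability enters is Corollary \ref{corr3}, which holds with high probability per stage and uniformly over $X$; a union bound over the $\le (\log n)^r$ stages in the window and the $\le d$ values of $i$ keeps the overall failure probability negligible (and consistent with the probability bound of \textbf{Theorem \ref{thm:Kelsen}}).

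The main obstacle, which the downward induction is designed to overcome, is an apparent circularity: Corollary \ref{corr3} bounds the single-stage growth at dimension $i$ in terms of the \emph{current} values $\Delta_k(H_s)$, $k>i$, not their values at $s_0$, so the bound cannot simply be iterated. Processing dimensions from $d$ down to $2$ removes this, since by the time dimension $i$ is treated the quantities $\Delta_k(H_s)$ for every larger $k$ and every $s$ in the window have already been pinned to $v_k(1+o(1))$. The secondary delicate point is purely quantitative: the exponent $f(i)$ must beat the window exponent $r$ \emph{plus} the migration-amplification exponent $2^{k-i+1}$ coming from Corollary \ref{corr3}, with enough slack that the product of $d$ near-$1$ factors is still $1+o(1)$. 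This is exactly what dictates the form of the function $f$ used in the remainder of the analysis, and it is also why the admissible window length $(\log n)^r$ and the eventual bound on the number of rounds are tied to $d$.
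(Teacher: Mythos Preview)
Your argument has a genuine gap: it never invokes Lemma~\ref{lemma2} (the ``reset-to-zero'' mechanism), and without it the bound cannot hold for the values of $r$ the lemma must cover. You control $d_{i-|X|}(X,\cdot)$ by summing the per-stage increase from Corollary~\ref{corr3} over all $(\log n)^r$ stages of the window. Even with the sharper chain $v_k\le v_i/(\log n)^{F(k-1)-F(i-1)}$ (your cruder $v_k\le v_{i+1}$ only makes things worse), the dominant $k=i+1$ term already gives a cumulative increase of order $(\log n)^{\,r+4-f(i)}\cdot v_i$, so you need $f(i)>r+4$. But $f$ is a \emph{fixed} function---$f(2)=7$ in Kelsen's original, $f(2)=d^2$ in the modification here---not a parameter you may tune to $r$; the $v_i$ in the lemma's statement are defined through this fixed $f$. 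Your sentence ``choosing $f$ so that $f(i)$ exceeds $r+2^{d-i+1}$'' is exactly where the argument breaks. And the lemma is subsequently applied with $r$ of order $F(d-1)(d-1)$, far beyond any such constraint.

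The paper's proof is structurally different. It also does downward induction on $j$, but at each level it uses Lemma~\ref{lemma2}: whenever some $d_{j-|x|}(x,H_s)$ exceeds $\tfrac12\,T_j(1+\lambda(n))$ it is at least $\Delta(H_s)/\bigl(2(\log n)^{F(j-1)}\bigr)$, so by Lemma~\ref{lemma2} it drops to $0$ in the next stage with probability at least $\tfrac14\bigl(2a(\log n)^{F(j-1)}\bigr)^{-j}$; iterating over $q_j$ stages forces a reset w.h.p. Corollary~\ref{corr3} is then used only to bound the increase accumulated over \emph{$q_j$} stages between resets---not over the whole window---and the choice of $f$ is precisely what makes that short-window increase at most $\lambda(n)\,T_j$. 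Because the degree is periodically reset and the between-reset growth is small, $\Delta_j(H_s)$ stays below $T_j(1+\lambda(n))$ no matter how large $r$ is. Your claim that ``the only probabilistic input is Corollary~\ref{corr3}'' is thus false: Lemma~\ref{lemma2} is the second, and indispensable, probabilistic ingredient.
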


In fact, \cite{Kelsen1992} proves that something stronger holds with high probability:
\begin{center}
		$v_j(H_s) \leq T_j \cdot (1+ \lambda(n))$,
\end{center}
where  $\lambda(n) = 2 \cdot \log^{(2)} n / \log n$.

The main argument is by induction on $d-j$ and we will not reproduce it entirely. We will, instead, give the general intuition and focus on the parts of the argument that could change if we allow $d$ to be non-constant. Notice that
$v_j(H_s) =  \max \{ \Delta_j(H_s), (\log n)^{f(j)} \cdot v_{j+1}(H_s) \}$. By induction,
\begin{align*}
 (\log n)^{f(j)} \cdot v_{j+1}(H_s) &\leq  (\log n)^{f(j)} \cdot T_{j+1} \cdot (1+ \lambda(n)) \\
  &\leq T_j \cdot (1+ \lambda(n)).
\end{align*}

So we only need to focus on showing that 
\begin{center}
$ \Delta_j(H_s) \leq T_j \cdot (1+ \lambda(n))$,
\end{center}
with high probability. The tactic is to show that, if $\Delta_j(H_s)$ ever becomes greater than $\frac{1}{2} \cdot T_j \cdot (1+ \lambda(n))$, then in $q_j$ consecutive stages it will decrease with high probability, taking into account the potential migration of edges during those stages. Specifically, suppose there exists an $x \subseteq V(H_s)$ such that
\begin{center}
	$d_{j-|x|}(x,H_s) \geq \frac{1}{2} \cdot T_j \cdot (1+ \lambda(n))$.
\end{center}
One can show that this implies that
\begin{center}
	$d_{j-|x|}(x,H_s) \geq \frac{\Delta(H_s)}{2(\log n)^{F(j-1)}}$.
\end{center}
At this point, we can apply \textbf{Lemma \ref{lemma2}} with $\epsilon = \frac{1}{2(\log n)^{F(j-1)}}$ and get that
\begin{align*}
 \Pr[d_{j-|x|}(x,H_{s+1}) >0] \leq \\
 1 - \frac{1}{2^{d(d+1)}\cdot (\log n)^{F(j-1)(j-1)}}.
\end{align*}
In other words, the probability that in the next round we still have a high normalized degree is small. Notice that if we repeat the argument for
\begin{center}
$q_j = 2^{d (d+1)} \cdot (\log\log n) \cdot (\log n)^{F(j-1)(j-1)+2}$
\end{center}  stages, we have that this remains true with probability at most $1/n^{\Theta(\log n \log\log n)}$. This is the first place in which we differ from the conventional analysis in \cite{Kelsen1992} since we cannot ignore the $2^{d (d+1)}$ factor because it is not constant any more. 

The only step left missing is to guarantee that the increase in $d_{j-|x|}(x,H_{s})$ during those $q_j$ stages is not large. We apply \textbf{Corollary \ref{corr3}} and get that the total increase is $q_j \cdot 	\displaystyle{\sum_{k>j}} (\log n)^{2^{k-j+1}} \cdot \Delta_k(H_s)$. We want to show that such an increase is smaller than $\lambda(n) \cdot T_j$ and since by the inductive assumption we have that $\Delta_k(H_s) \leq T_k \cdot (1+\lambda(n))$, we are left to show that
\begin{center}
	$q_j \cdot \displaystyle{\sum_{k>j}} (\log n)^{2^{k-j+1}} \cdot T_k \cdot (1+\lambda(n)) \leq \lambda(n) \cdot T_j$.
\end{center}

After some calculation, plugging in the values of $q_j$ and $\lambda(n)$, this can be shown to reduce itself to:
\begin{center}
	$2^{d(d+1)} \cdot \displaystyle{\sum_{k>j}} (\log n)^{2^{k-j+1} +F(j-1)\cdot j - F(k-1) +2} \leq \frac{2}{\log n + 2\log \log n}$.
\end{center}

It is at this point that the definition of $f$ comes into play. \cite{Kelsen1992} define $f(2)=7$ and $f(i) = (i-1) \cdot \sum_{j=2}^{i-1} f(i)+7$ for $i>2$. We then get that $F(i) = i \cdot F(i-1)+7$ for $i\geq 2$ and $F(1)=0$. Notice that this definition of $F$ does not allow us to make the above argument. Consider the case when $k=j+1$. Then  
\begin{center}
$2^{k-j+1} +F(j-1)\cdot j - F(k-1) +2 = -1$
\end{center}
and the claim becomes
\begin{center}
 $2^{d(d+1)} \leq \frac{\log n}{\log n + 2\log \log n}$.
\end{center}
This is not true for the larger value of $d$ we are considering. Notice that this was not an issue in the original analysis, because $2^{d(d+1)}$ was a constant in the case they were considering.  


In order for the claim to be true, a different definition of $f$ is required. Specifically, we define the recurrence relationship to be
\begin{center}
 $f(i) = (i-1) \cdot \sum_{j=2}^{i-1} f(i)+d^2$.
\end{center}
In this context, we obtain that $F(i) = i \cdot F(i-1)+d^2$. The claim then becomes:
\begin{center}
	$2^{d(d+1)} \cdot \displaystyle{\sum_{k>j}} (\log n)^{2^{k-j+1}+2-d^2+ F(j) - F(k-1)} \leq \frac{2}{\log n + 2\log \log n}$.
\end{center}
 We will now show that the claim is true for this new definition of $f$. 

We will begin by first noticing that, for any $j$, the highest term in the sum is achieved for $k=j+1$. Formally:
\begin{lemma} For any $k>j+1$ and any $j \geq 2$, we have
\begin{center}
 $2^{k-j+1}+2-d^2 +F(j) - F(k-1)\leq 6 - d^2 $.
 \end{center}
\end{lemma}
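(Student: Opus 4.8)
The plan is to first simplify the statement: subtracting $2-d^2$ from both sides, the claimed inequality is equivalent to
\[
F(k-1) - F(j) \ \ge\ 2^{\,k-j+1} - 4 ,
\]
and I would prove this form directly from the recurrence $F(i) = i\,F(i-1) + d^2$ with $F(1)=0$.

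Two elementary observations about $F$ do essentially all the work. First, since every term on the right of the recurrence is nonnegative, a trivial induction gives $F(i)\ge 0$ for all $i$; hence $F$ is nondecreasing and $F(i)\ge F(2)=d^2$ for every $i\ge 2$. In the regime where the lemma is invoked $d\ge 2$ (indeed $d$ grows with $n$), so $F(j)\ge d^2\ge 4$ for all $j\ge 2$. Second, dropping the additive $d^2$ gives $F(i)\ge i\,F(i-1)$, and iterating this from $i=j+1$ up to $i=k-1$ yields
\[
F(k-1) \ \ge\ (j+1)(j+2)\cdots(k-1)\cdot F(j) \ =\ \frac{(k-1)!}{j!}\,F(j).
\]

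With these in hand I would finish as follows. Because $k>j+1$, the product $(j+1)(j+2)\cdots(k-1)$ is nonempty: it has $k-j-1\ge 1$ integer factors, each at least $j+1\ge 3\ge 2$, so $\tfrac{(k-1)!}{j!}\ge 2^{\,k-j-1}$. Therefore
\[
F(k-1) - F(j) \ \ge\ \Bigl(\tfrac{(k-1)!}{j!}-1\Bigr)F(j) \ \ge\ \bigl(2^{\,k-j-1}-1\bigr)\cdot 4 \ =\ 2^{\,k-j+1}-4 ,
\]
where we used $F(j)\ge 4$ together with $2^{\,k-j-1}-1\ge 0$, and the last equality is just $4\cdot 2^{\,k-j-1}=2^{\,k-j+1}$. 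Adding $2-d^2$ back to both sides recovers the stated bound $2^{\,k-j+1}+2-d^2+F(j)-F(k-1)\le 6-d^2$.

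I do not expect a real obstacle here. The only things needing care are the off-by-one in the range of the product $(j+1)\cdots(k-1)$ — it is nonempty exactly because $k\ge j+2$ — and the harmless use of "$n$ sufficiently large" (equivalently $d\ge 2$) to secure $F(j)\ge d^2\ge 4$. The conceptual point, and the reason this is precisely the step where the revised definition of $f$ matters, is that the additive $d^2$ makes $F$ large enough at the base while the multiplicative factor $i$ makes it grow fast enough from one index to the next, so that $F(k-1)-F(j)$ dominates the exponential migration term $2^{\,k-j+1}$ uniformly in the now-superconstant $d$ — which would fail for the original recurrence $F(i)=i\,F(i-1)+7$.
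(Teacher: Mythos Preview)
Your argument is correct. Rewriting the inequality as $F(k-1)-F(j)\ge 2^{\,k-j+1}-4$, iterating $F(i)\ge i\,F(i-1)$ to get $F(k-1)\ge \frac{(k-1)!}{j!}F(j)\ge 2^{\,k-j-1}F(j)$, and then using $F(j)\ge d^2\ge 4$ finishes it cleanly. The only implicit assumption, $d\ge 2$, is automatic here: the lemma is vacuous unless there exists a $k$ with $j+1<k\le d$, which already forces $d\ge 4$.

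The paper takes a slightly different route: rather than bounding $F(k-1)-F(j)$ directly, it argues that the exponent $g(k)=2^{\,k-j+1}+2-d^2+F(j)-F(k-1)$ is a \emph{decreasing} function of $k$ on $k\ge j+1$, so its maximum is attained at $k=j+1$, where it equals exactly $6-d^2$. That reduces to the one-step inequality $F(k)-F(k-1)=(k-1)F(k-1)+d^2\ge 2^{\,k-j+1}$, which one then checks. Your approach instead telescopes the recurrence multiplicatively across the whole range $j+1,\dots,k-1$ in one shot. Both are elementary and of comparable length; your version is a bit more self-contained and makes the role of the base value $F(j)\ge d^2$ (the reason the constant $7$ had to be replaced by $d^2$) especially transparent, while the paper's monotonicity phrasing explains more directly why $k=j+1$ is singled out as the dominant term in the subsequent sum.
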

The proof is done by showing that the terms are decreasing as a function of $k$ and therefore, the maximum is achieved for the lowest possible value of $k$: $j+1$.

As a consequence, the entire left hand side of the inequality can be upper bounded by
\begin{center}
	$2^{d(d+1)} \cdot (d-j) \cdot \frac{1}{(\log n)^{d^2-6}}$.
\end{center}
By taking $2^{d(d+1)} \leq e^{d(d+1)}$ and $d-j< \log \log n$, it would be enough to show 
\begin{center}
	$e^{d(d+1)} \cdot \frac{1}{(\log n)^{d^2-6}} \leq \frac{1}{\log^2 n} $.
\end{center}
In other words, we can show that
\begin{center}
	$d(d+1) \leq (\log \log n) \cdot (d^2-8)$.
\end{center}
One can check that this inequality holds for $d <\frac{\log^{(2)}n}{4 \log^{(3)}n}$.

At this point, we have shown that the total increase in $d_{j-|x|}(x,H_{s})$ during those stages is upper bounded with high probability by $\lambda(n) \cdot T_j$. Moreover, notice that after $q_j$ stages,  $\Delta_j(H)$ will not exceed $ T_j \cdot \frac{1+3\lambda(n)}{2}$ and hence, after $q_d$ stages, we have that, with high probability, 
\begin{center}
	$v_j(H_{s_1}) \leq  T_j \cdot \frac{1+3\lambda(n)}{2}$
\end{center}
for any $2 \leq j \leq d$ and $s_0 \leq s_1 \leq s_0 + q_j$. In fact, going back to the start of the algorithm, $v_2(H)$ is reduced by a constant factor, with high probability, every $q_d$ stages. Hence, after $O(\log n \cdot q_d)$ stages, we have that $v_2(H) = 0$ and therefore, $V(H)=0$ and the algorithm terminates. 

Now we are left to prove that 
\begin{center}
 $\log n \cdot q_d \leq (\log n)^{(d+4)!}$.
\end{center}
Notice that
\begin{align*}
q_d &\leq (log^{(2)} n)^2 \cdot (\log n)^{F(d-1)(d-1)+2}\\
    &\leq (\log n)^{F(d-1)(d-1)+3} \\
    &\leq (\log n)^{(d+4)!-1}
\end{align*}
where the last inequality can be verified by inductively proving that $F(i) \leq d^2 \cdot (i+2)!$ for all $i$.

\section{Stronger Concentration Bound}
The next step of the proof that we are going to improve is the bound that Kelsen gives on the maximum potential increase in edges in one round, using the same setting as in the original analysis but employing the Kim-Vu concentration bound \cite{KimVu}. We obtain an analogue of \emph{Corollary 2} in \cite{Kelsen1992}:

\begin{corollary} For $X \subseteq V$, and $1 \leq j < k\leq d-|X|$, we have 
	\begin{center}
		$\Pr[S(X,j,k) > (1+a_{k-j} \lambda^{k-j}) \cdot (\Delta_{|X|+k}(H))^j ] \leq 2e^2e^{-\lambda}n^{k-j-1}$,
	\end{center}
	where $a_{k-j}= 8^{k-j} (k-j)!^{1/2}$.
\end{corollary}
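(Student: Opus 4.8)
The plan is to apply the Kim--Vu polynomial concentration inequality to the same polynomial that Kelsen uses, namely the random variable $S(X,j,k)$ that counts (with weights $w'(Y) = |N_j(X\cup Y, H')|$) the number of edges $Y \in X_{j,k}$ that are fully colored blue, where each vertex is independently blue with probability $p$. This is a multilinear polynomial of degree $k-j$ in the independent $\{0,1\}$ variables $C_v$, so Kim--Vu applies directly. The key point is that the relevant ``expected partial derivative'' quantities $E_i$ in the Kim--Vu bound are exactly the maxima over $|x|=i$ of the quantities $P(H',w',p,x)$ appearing in the statement of Theorem~\ref{thm:Kelsenbound}; in particular $E_0 = \mathbf{E}[S(X,j,k)]$ and $\max_i E_i = D(H',w',p)$. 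So the first step is to identify these derivative quantities and invoke Lemma~3 of \cite{Kelsen1992}, which already gives $D(H',w',p) \leq (\Delta_{|X|+k}(H))^j$; combined with the fact that $E_0 \leq D(H',w',p)$, this lets us take $(\Delta_{|X|+k}(H))^j$ as the common upper bound $E \geq E_0, E_i$ that drives the inequality.

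Next I would instantiate the Kim--Vu bound in its standard form: for a multilinear polynomial of degree $t := k-j$ with nonnegative coefficients, $\Pr[S > E + a_t \lambda^t \sqrt{E E'}]$ (or the cruder version $\Pr[S > (1 + a_t\lambda^t) E']$ when $E \le E'$) is at most $2e^2 e^{-\lambda} n^{t-1}$, where $a_t = 8^t \sqrt{t!}$ is the usual Kim--Vu constant. Here I set $E' = (\Delta_{|X|+k}(H))^j$, use $E \le E'$, and read off the constant $a_{k-j} = 8^{k-j}(k-j)!^{1/2}$ and the tail $2e^2 e^{-\lambda} n^{k-j-1}$ exactly as stated. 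The main thing to check carefully is the bookkeeping that translates between Kim--Vu's normalization (which is stated in terms of $\max_{i\ge 1} E_i$ and $E_0$ separately, sometimes with a $\max$ over a range of $i$) and Kelsen's single quantity $D(H',w',p)$: I need that $\sqrt{E_0 \cdot \max_{i\ge 1}E_i} \le D(H',w',p) \le (\Delta_{|X|+k}(H))^j$, which follows since both factors are bounded by $D$. I should also confirm the degree-$t$ polynomial really has all the structural properties Kim--Vu needs (nonnegative integer coefficients after clearing, bounded variables), which is immediate from the combinatorial definition of $X_{j,k}$ and $w'$.

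I expect the main obstacle to be purely a matter of reconciling conventions rather than mathematical content: different statements of the Kim--Vu inequality use slightly different definitions of the ``expectation of partial derivatives'' (whether one divides by the factorial from repeated differentiation, whether $E_0$ is included, whether the tail exponent is $t-1$ or $t$), and getting the constant to come out as exactly $8^{k-j}(k-j)!^{1/2}$ and the polynomial factor as exactly $n^{k-j-1}$ requires picking the right normalization and being slightly generous (the extra $2e^2$ and the freedom in $\lambda$ absorb lower-order slack). Once the dictionary between $E_i$ and $P(H',w',p,x)$ is pinned down, the rest is a direct substitution, with Lemma~3 of \cite{Kelsen1992} doing the real work of bounding $D$ by $(\Delta_{|X|+k}(H))^j$. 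The conclusion then feeds into an improved version of Corollary~\ref{corr3} (with $(\log n)^{2^{k-j+1}}$ replaced by a factor like $a_{k-j}(\log n)^{O(k-j)}$ once $\lambda$ is chosen of order $\log n$), which is where the claimed improvement over Kelsen's bound comes from.
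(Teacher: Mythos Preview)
Your proposal is correct and follows essentially the same approach as the paper: apply the Kim--Vu inequality to the degree $k-j$ polynomial $S(X,j,k)=S(H',w',p)$, identify the Kim--Vu partial-derivative expectations with Kelsen's quantities $P(H',w',p,x)$ so that their maximum is $D(H',w',p)$, and then invoke Lemma~3 of \cite{Kelsen1992} to bound $D$ by $(\Delta_{|X|+k}(H))^j$. The paper's own justification is in fact terser than yours (``simple algebra can show that this result follows even for the new value of $a$''), so your explicit bookkeeping of $E_0$ versus $\max_{i\ge 1}E_i$ and the resulting $\sqrt{E_0\cdot E'}\le D$ step is a helpful elaboration rather than a deviation.
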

Notice that we upper bounded the term $D(H',w',p)$ by $(\Delta_{|X|+k}(H))^j$, just like \cite{Kelsen1992}. Simple algebra can show that this result follows even for the new value of $a$. Choosing $\lambda = \Theta(\log^2 n)$,  we get that an analogue of Corollary $3$ in \cite{Kelsen1992}:
\begin{corollary} With high probability, for $2 \leq j \leq d$, the maximum increase in $d_{j-|X|}(X,H)$ for any non-empty $X \subseteq V$ during a single stage of the algorithm is less than:
	\begin{center}
		$\displaystyle{\sum_{k>j}} (\log n)^{2(k-j)} \cdot \Delta_k(H)$.
	\end{center}
\end{corollary}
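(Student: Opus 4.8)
The plan is to derive this corollary from the preceding (Kim--Vu based) corollary by the same argument that yields Kelsen's Corollary~3 from his Corollary~2. For a non-empty $X$ with $|X|<j$, the one-stage increase of the normalized degree $d_{j-|X|}(X,H)$ (with respect to size-$j$ edges around $X$) is driven by size-$k$ edges around $X$, $k>j$, shrinking to size $j$; the polynomial of the preceding corollary, applied with relative indices $j-|X|$ and $k-|X|$, upper-bounds the contribution of the size-$k$ edges to the increase of the raw count $|N_{j-|X|}(X,H)|$, and a $(j-|X|)$-th root then passes from the raw count to the normalized degree.

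First I would fix such an $X$ and a $j$ with $2\le j\le d$, and for each $k$ with $j<k\le d$ apply the preceding corollary with $j'=j-|X|$, $k'=k-|X|$ (so $k'-j'=k-j$ and $|X|+k'=k$). This gives, except with probability at most $2e^2e^{-\lambda}n^{k-j-1}$, that the contribution of the size-$k$ edges to the one-stage increase of $|N_{j-|X|}(X,H)|$ is at most $(1+a_{k-j}\lambda^{k-j})\,(\Delta_k(H))^{j-|X|}$, with $a_{k-j}=8^{k-j}(k-j)!^{1/2}$.

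Next I would pick $\lambda=\Theta(\log^2 n)$ --- more precisely a $\mathrm{poly}(d)$ factor below $\log^2 n$, which since $d=o(\log n)$ still leaves $\lambda=(\log n)^{2-o(1)}$ --- so that, using $a_{k-j}^{1/(k-j)}=8\,(k-j)!^{1/(2(k-j))}\le 8\sqrt d$, the coefficient satisfies $1+a_{k-j}\lambda^{k-j}\le(\log n)^{2(k-j)}$. Then I would union-bound the failure events over all non-empty $X$ with $|X|<d$ (at most $n^d$ of these, since $|X|<d$) and over the at most $d^2$ pairs $(j,k)$: the total failure probability is at most $2e^2 d^3 n^{2d}e^{-\lambda}$, which is $n^{-\Omega(\log n)}$ since $2d=o(\log n)$ while $e^{-\lambda}=n^{-\Omega(\log n)}$. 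Hence, with high probability, for every admissible $X,j$ the one-stage increase of $|N_{j-|X|}(X,H)|$ is at most $\sum_{k>j}(\log n)^{2(k-j)}(\Delta_k(H))^{j-|X|}$; raising to the power $1/(j-|X|)\le 1$ and using subadditivity of $t\mapsto t^{1/(j-|X|)}$ together with $(1+a_{k-j}\lambda^{k-j})^{1/(j-|X|)}\le 1+a_{k-j}\lambda^{k-j}\le(\log n)^{2(k-j)}$ shows that the one-stage increase of $d_{j-|X|}(X,H)$ is at most $\sum_{k>j}(\log n)^{2(k-j)}\Delta_k(H)$, which is the claim (taking the maximum over $X$ costs nothing, the bound being independent of $X$).

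The step I expect to need the most care is the choice of $\lambda$: it must be small enough to keep $a_{k-j}\lambda^{k-j}$ within $(\log n)^{2(k-j)}$, yet large enough that $e^{-\lambda}$ overwhelms the $n^{2d}$ from the union bound over subsets $X$. Both hold precisely because $d$ is super-small, so that $8\sqrt d=o(\log n)$ and $n^{2d}=n^{o(\log n)}$; this is the one place where we again lean on $d\le\frac{\log^{(2)}n}{4\log^{(3)}n}$, and it is also why the clean exponent $2(k-j)$ survives in place of the doubly-exponential $2^{k-j+1}$ that appears when Kelsen's own concentration bound is used. Everything else is the bookkeeping of translating between relative and absolute edge sizes and between the counts $|N_{j-|X|}|$ and the normalized degrees $d_{j-|X|}$.
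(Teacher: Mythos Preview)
Your proposal is correct and follows the same route as the paper: the paper's own argument is the single line ``Choosing $\lambda=\Theta(\log^2 n)$, we get an analogue of Corollary~3 in \cite{Kelsen1992}'', and you have simply written out what that line means --- apply the Kim--Vu corollary to each $(X,j,k)$, union-bound over all $X$ and all $(j,k)$, and pass to normalized degrees by taking a $(j-|X|)$-th root using subadditivity. Your extra care about adjusting $\lambda$ by a $\mathrm{poly}(d)$ factor so that $8\sqrt{d}\,\lambda\le(\log n)^2$ while $e^{-\lambda}$ still beats the $n^{O(d)}$ union bound is a detail the paper suppresses inside the $\Theta$, and it goes through exactly because $d\le\frac{\log^{(2)}n}{4\log^{(3)}n}$.
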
 
Notice that the bound of $(\log n)^{2(k-j)}$ is much smaller than the one of $(\log n)^{2^{k-j+1}}$ in \cite{Kelsen1992}.

\subsection{Discussion}
In context of this improvement, the natural next step is to investigate what effect it has on the overall running time of \algo{\textsc{BL}}. We show that, under the current set up of the potential function, no improvement is possible. Specifically, we show that the function $F$ must be roughly exponential for the argument to follow, despite the obvious improvements.

Notice that in our previous attempt to call \algo{\textsc{BL}} on a hypergraph with super-constant dimension, the main issue was showing that the increase in $q_j$ rounds was upper bounded by $T_j \cdot \lambda(n)$. Incorporating all of the new improvements, we get that the claim formally looks like:
\begin{center}
	$(5d)^d \cdot \displaystyle{\sum_{k>j}} (\log n)^{2(k-j) +F(j-1)\cdot j - F(k-1) +2} \leq \frac{2}{\log n + 2\log \log n}$.
\end{center}

We proved this claim by showing that the largest term in the sum was upper bounded by $\frac{1}{(\log n)^{d^2-6}}$. We check the minimal conditions that $f$ must satisfy in order for the new claim to be true by precisely looking at this largest term in the case when $k=j+1$ for a fixed $2 \leq j \leq d$. Notice that, first of all, this will be the largest term when we allow $F$ to satisfy $F(k-1) > F(j) + 2(k-j-2)$ for all $k>j$. Given that, the term will be:
\begin{center}
 $(\log n)^{4 + F(j-1) \cdot j - F(j)}$.
\end{center}
Notice that, in order for the claim to be true, this term needs to be smaller than 
\begin{center}
 $ \frac{1}{(5d)^d }\cdot \frac{2}{\log n + 2\log \log n}$.
\end{center}
In order for this to happen, we must have that
\begin{center}
$(\log n)^{4 + F(j-1) \cdot j - F(j)} \leq \frac{1}{\log n}$.
\end{center}
This, in turn, requires that 
\begin{center}
$F(j) \geq F(j-1) \cdot j + 5$.
\end{center}

\section{Conclusion}
In this paper, we build on the RNC algorithm for computing an MIS in constant dimension hypergraphs to get an $n^{o(1)}$ algorithm on general hypergraphs when the number of edges is upper bounded by $ n^{\frac{\log^{(2)}n}{8(\log^{(3)}n)^2}}$. In order to perform the analysis, we prove that the subroutine algorithm can be adapted to run on a larger dimension while maintaining an appropriate running time. We also present independent improvements to the analysis of the latter and identify the main bottleneck in the approach that affects the final runtime most significantly. For example, notice that the factor of $j$ in the above inequality $F(j) \geq F(j-1) \cdot j + 5$ originated from \textbf{Lemma ~\ref{lemma2}}. Specifically, \cite{Beame1990} lower bound the probability that $d_j(X) > \epsilon \Delta$ becomes $0$ in the next iteration, as a function of $(\epsilon/a)^j$. A refinement of that result could potentially lead to a weaker restriction on $F$ and hence, a smaller running time.

\paragraph{Acknowledgments} Authors David G. Harris and Aravind Srinivasan were supported in part by \textbf{NSF Award CNS-1010789}. 

\bibliographystyle{plain}
\bibliography{HMIS}

\appendix
\section{\algo{\textsc{BL}} algorithm} \label{pseudo}
We give the pseudocode of the \algo{\textsc{BL}} algorithm as initially described in \cite{Beame1990}.
\begin{algorithm}
\caption{\algo{\textsc{BL}}}

\textbf{Input:} A hypergraph $H = (V,E)$ \\
\textbf{Output:} A maximal independent set $I \subseteq V$.

~\

\begin{algorithmic}[1] 
\STATE Calculate $\Delta(H)$ as defined in Section \ref{Sec3}.
\STATE Let $d = \max \{ |e| : e \in E\}$ and $p = 1/ (2^{d+1} \Delta)$.
\STATE $H'=(V',E') \leftarrow H=(V,E)$.
\STATE $I \leftarrow \emptyset$.
\WHILE{$V' \neq \emptyset$}
		\STATE Select vertices independently at random with probability $p$. 
		\STATE Let $I'$ be the collection of such selected vertices.
		\FORALL {$e \in E'$ such that $e \subseteq I'$}
			\STATE $I' \leftarrow I'\setminus e$.
		\ENDFOR
		\STATE $I \leftarrow I \cup I'$.
		\STATE $V' \leftarrow V' \setminus I'$.
		\FORALL{$e \in E'$}
			\STATE $e \leftarrow e \setminus I'$.
		\ENDFOR
		\FORALL{$e,e' \in E'$}
			\IF{$e \subseteq e'$}
				\STATE $E' \leftarrow E' \setminus e$.
			\ENDIF
		\ENDFOR	
		\FORALL{$e = \{v\} \in E'$}
			\STATE $E' \leftarrow E' \setminus e$.
			\STATE $V' \leftarrow V' \setminus \{v\}$.
		\ENDFOR
\ENDWHILE
\STATE Return $I$.

\end{algorithmic}
\end{algorithm}

\end{document}